\newtheorem{theorem}{Theorem}
\newtheorem{corollary}[theorem]{Corollary}
\newtheorem{example}[theorem]{Example}
\newtheorem{lemma}[theorem]{Lemma}
\newtheorem{remark}[theorem]{Remark}
\newenvironment{proof}[1][Proof]{\noindent\textbf{#1.} }{\ \rule{0.5em}{0.5em}}
\newcommand{\ZZ}{{\mathbb{Z}_{2}\mathbb{Z}_{2}[u]}}
\newcommand{\C}{{\mathcal{C}}}
\newcommand{\Z}{{\mathbb{Z}}}
\begin{document}

\title{Self-Dual Cyclic and Quantum Codes Over $\mathbb{Z}%
_{2}^{\alpha}\times(\mathbb{Z}_{2}+u{\mathbb{Z}}_{2})^{\beta}$}
\author{Ismail Aydogdu$^{a}$\thanks{{\footnotesize iaydogdu@yildiz.edu.tr
(Ismail Aydogdu)}} and Taher Abualrub $^{b}$\thanks{{\footnotesize %
abualrub@aus.edu (Taher Abualrub )}} \\
$^{a}${\footnotesize Department of Mathematics, Yildiz Technical University}%
\\
$^{b}${\footnotesize Department of Mathematics and Statistics, American
University of Sharjah}}
\maketitle

\begin{abstract}
In this paper we introduce self-dual cyclic and quantum codes over $\mathbb{Z%
}_{2}^{\alpha }\times (\mathbb{Z}_{2}+u{\mathbb{Z}}_{2})^{\beta }$. We
determine the conditions for any $\mathbb{Z}_{2}\mathbb{Z}_{2}[u]$-cyclic
code to be self-dual, that is, ${\mathcal{C}}={\mathcal{C}}^{\perp }$. Since
the binary image of a self-orthogonal ${\mathbb{Z}_{2}\mathbb{Z}_{2}[u]}$%
-linear code is also a self-orthogonal binary linear code, we introduce
quantum codes over $\mathbb{Z}_{2}^{\alpha }\times \left( \mathbb{Z}_{2}+u%
\mathbb{Z}_{2}\right) ^{\beta }$. Finally, we present some examples of
self-dual cyclic and quantum codes that have good parameters.
\end{abstract}

\begin{quotation}
\bigskip Keywords: $\mathbb{Z}_{2}\mathbb{Z}_{2}[u]$-linear Codes, Cyclic
Codes, Self-Dual Codes, Quantum Codes, CSS.
\end{quotation}

\section{Introduction}

Recently, inspired by the $\mathbb{Z}_2\mathbb{Z}_4$-additive codes
(introduced in \cite{3}), $\mathbb{Z}_2\mathbb{Z}_{2}[u]$-linear codes have
been introduced in \cite{2}. Though these code families are similar to each
other, $\mathbb{Z}_{2}\mathbb{Z}_{2}[u]$-linear codes have some advantages
compared to $\mathbb{Z}_{2}\mathbb{Z}_{4}$-additive codes. For example, the
Gray image of any $\mathbb{Z}_{2}\mathbb{Z}_{2}[u]$-linear code will always
be a linear binary code, but this property does not hold for $\mathbb{Z}_2%
\mathbb{Z}_4$-additive codes. Aydogdu et al. have studied cyclic and
constacyclic codes over this new structure in \cite{consta}.

Self-dual codes have an important place in coding theory because many of the
best codes known are of this type and also they have a very rich algebraic
property. Recently, there has been much interest in studying self-dual codes
over finite fields and rings. However, there do not exist much work for
self-dual cyclic codes. Binary self-dual cyclic codes were discussed by
Sloane and Thompson \cite{slone} and also self-dual cyclic codes over $%
F_{2}+uF_{2}$ were investigated in \cite{5}. Furthermore, Aydogdu and Kaya
have studied self-dual linear codes over ${\mathbb{Z}_{2}\mathbb{Z}_{2}[u]}$
more recently in \cite{ism}.

Quantum error-correcting codes play a significant role in quantum
computation and quantum information. One of the important classes of quantum
codes is the class of stabilizer codes. The CSS construction deals with the
self-orthogonal codes over the ring $R$ with $q$ elements. This construction
was introduced in 1996 by Calderbank and Shor \cite{shor} and Steane \cite%
{steane}. It provides the most direct link to classical coding theory. These
codes are interesting because they illustrate how classes of classical error
correcting codes can sometimes be turned into classes of quantum codes. They
are also useful for one of then simpler proofs of security of the BB84
quantum key-distribution protocol \cite{shor}. The work done by Calderbank
and Shor spurred many researches to study quantum error correcting codes
over different algebraic structures. The reader may see some of them in \cite%
{6,bieb,ozen,must,must1}.

In this paper, we are interested to study in self-dual cyclic codes over $\mathbb{Z}_{2}^{\alpha }\times R^{\beta }$ where $R=\Z_{2}+u\Z_{2}=\left\{ 0,1,u,u+1\right\} $ is the ring with four elements and $u^{2}=0.$ Then we
will apply these codes in the subject of quantum codes. The paper is
organized in the following way: Section 2 provides a preliminary about the
subject and the structure of cyclic codes and their duals over $\mathbb{Z}_{2}^{\alpha }\times R^{\beta }$. Section 3 gives the structure of self-dual $\ZZ$-cyclic codes. We will also provide some examples of
self-dual $\ZZ$-cyclic codes. In section 4, we study
quantum codes over $\mathbb{Z}_{2}^{\alpha }\times R^{\beta }$. We will also illustrate application of our theory by providing examples of optimal quantum codes. Section 5 gives a conclusion of the paper.

\section{Preliminary}

Let $R=\mathbb{Z}_{2}+u\mathbb{Z}_{2}=\left\{ 0,1,u,1+u\right\} $ be the
four-element ring with $u^{2}=0.$ A non-empty subset $\mathcal{C}$ of $%
\mathbb{Z}_{2}^{\alpha}\times R^{\beta}$ is called a $\mathbb{Z}_{2}\mathbb{Z%
}_{2}[u]$-linear code if it is an $R$-submodule of $\mathbb{Z}%
_{2}^{\alpha}\times R^{\beta}$. We know that the ring $R$ is isomorphic to $%
\mathbb{Z}_{2}^{2}$ as an additive group. Hence, for some positive integers $%
k_{0}$, $k_{2}$ and $k_{1}$, any $\mathbb{Z}_{2}\mathbb{Z}_{2}[u]$-linear
code $\mathcal{C}$ is isomorphic to an abelian group of the form $\mathbb{Z}%
_{2}^{k_{0}+k_{2}}\times \mathbb{Z}_{2}^{2k_{1}}$. We say such a code ${%
\mathcal{C}}$ is of type $\left(\alpha,\beta;k_{0};k_{1},k_{2}\right)$.
Furthermore the inner product of any two elements $V=\left( a_{0}a_{1}\ldots
a_{\alpha -1},b_{0}b_{1}\ldots b_{\beta -1}\right)$, $W=\left(
d_{0}d_{1}\ldots d_{\alpha-1},e_{0}e_{1}\ldots e_{\beta -1}\right) \in
\mathbb{Z}_{2}^{\alpha }\times R^{\beta }$ is defined by
\begin{equation*}
\left\langle V,W\right\rangle =\left( u\sum_{i=0}^{\alpha
-1}a_{i}d_{i}+\sum_{j=0}^{\beta -1}b_{j}e_{j}\right) \in \mathbb{Z}_{2}+u%
\mathbb{Z}_{2}.
\end{equation*}
The dual code ${\mathcal{C}}^{\perp}$ of ${\mathcal{C}}$ according to this
inner product is defined as follows.
\begin{equation*}
{\mathcal{C}}^{\perp}=\{W\in \mathbb{Z}_{2}^{\alpha}\times
R^{\beta}|~\langle V,W\rangle =0,\text{~for all}~V\in {\mathcal{C}} \}
\end{equation*}
Most of the concepts about the structure of ${\mathbb{Z}_{2}\mathbb{Z}_{2}[u]%
}$-linear codes have been given in \cite{2} with details.

An $R$-submodule $\mathcal{C}$ of $\mathbb{Z}_{2}^{\alpha }\times R^{\beta }$
is called a ${\mathbb{Z}_{2}\mathbb{Z}_{2}[u]}$-cyclic code if for any
codeword $v=\left( a_{0},a_{1},\ldots ,a_{\alpha -1},b_{0},b_{1},\ldots
,b_{\beta -1}\right) \in \mathcal{C},$ its cyclic shift
\begin{equation*}
T(v)=\left( a_{\alpha -1},a_{0},\ldots ,a_{\alpha -2},b_{\beta
-1},b_{0},\ldots ,b_{\beta -2}\right)
\end{equation*}%
is also in $\mathcal{C}$.

There is a one-to-one correspondence between the elements in $\mathbb{Z}%
_{2}^{\alpha }\times R^{\beta }$ and the elements in $R_{\alpha ,\beta }=%
\mathbb{Z}_{2}[x]/\left\langle x^{\alpha }-1\right\rangle \times
R[x]/\left\langle x^{\beta }-1\right\rangle .$ That is, any codeword $%
v=\left( a_{0}a_{1}\ldots a_{\alpha -1},b_{0}b_{1}\ldots b_{\beta -1}\right)
\in \mathbb{Z}_{2}^{\alpha }\times R^{\beta }$ can be identified as
\begin{eqnarray*}
v(x) &=&\left( a_{0}+a_{1}x+\ldots +a_{\alpha -1}x^{\alpha
-1},b_{0}+b_{1}x+\ldots +b_{\beta -1}x^{\beta -1}\right)  \\
&=&\left( a(x),b(x)\right) .
\end{eqnarray*}

\begin{theorem}
\cite{consta} \label{Main}Let $\mathcal{C}$ be a ${\mathbb{Z}_{2}\mathbb{Z}%
_{2}[u]}$-cyclic code in $R_{\alpha,\beta }.$ Then $\mathcal{C}$ is
identified uniquely as $\mathcal{C}=\langle \left( f(x),0\right) ,\left(
l(x),g(x)+ua(x)\right) \rangle ,$ where $f(x)|\left( x^{\alpha}-1\right)(%
\bmod~2),$ and $a(x)|g(x)|\left( x^{\beta}-1\right) \left(\bmod~2\right),$
and $l(x)$ is a binary polynomial satisfying $\deg (l(x))<\deg (f(x)),$ $%
f(x)|\left( \dfrac{x^{\beta}-1}{a(x)}\right) l(x)\left(\bmod~2 \right)$ and $%
f(x)\neq\left( \dfrac{x^{\beta}-1}{a(x)}\right)l(x)\left(\bmod~2\right)$. If
${\mathcal{C}}$ is generated by only $\left( l(x),g(x)+ua(x)\right)$ then $%
\left(x^{\alpha}-1\right)|\left(\dfrac{x^{\beta}-1}{a(x)}\right) l(x)\left(%
\bmod~2 \right)$ and $a(x)|g(x)|\left( x^{\beta}-1\right) \left(\bmod%
~2\right)$.
\end{theorem}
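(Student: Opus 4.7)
The plan is to mirror the standard approach used for $\mathbb{Z}_2\mathbb{Z}_4$-cyclic codes. Consider the projection $\pi:\mathcal{C}\to R[x]/\langle x^\beta-1\rangle$ given by $(a(x),b(x))\mapsto b(x)$, and the kernel $K=\{(a(x),0)\in\mathcal{C}\}$. Because cyclic shifts are respected coordinatewise, $\pi(\mathcal{C})$ is an ideal of $R[x]/\langle x^\beta-1\rangle$, and $K$ corresponds to a binary cyclic code of length $\alpha$.

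First I would invoke the known structure of cyclic codes over $R=\mathbb{Z}_2+u\mathbb{Z}_2$ to conclude that $\pi(\mathcal{C})=\langle g(x)+ua(x)\rangle$ with $a(x)\mid g(x)\mid(x^\beta-1)\pmod 2$. Likewise, the standard theory of binary cyclic codes gives $K=\langle(f(x),0)\rangle$ with $f(x)\mid(x^\alpha-1)\pmod 2$. Then, picking any preimage under $\pi$ of the generator, I obtain an element $(l(x),g(x)+ua(x))\in\mathcal{C}$, and together these two elements generate $\mathcal{C}$: every codeword $(a(x),b(x))$ satisfies $b(x)\in\pi(\mathcal{C})$, so subtracting a suitable multiple of $(l(x),g(x)+ua(x))$ kills the second component and leaves something in $K$. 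By reducing $l(x)$ modulo $f(x)$ (subtracting an appropriate $\mathbb{Z}_2[x]$-multiple of $(f(x),0)$) I may assume $\deg l(x)<\deg f(x)$.

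Next I would derive the divisibility conditions on $l(x)$. Multiplying $(l(x),g(x)+ua(x))$ by $\tfrac{x^\beta-1}{a(x)}$ and using $a(x)\mid g(x)$ shows that the second coordinate collapses to zero in $R[x]/\langle x^\beta-1\rangle$, so $\bigl(\tfrac{x^\beta-1}{a(x)}l(x),0\bigr)\in K$. Since $K$ is generated by $(f(x),0)$, this forces $f(x)\mid\tfrac{x^\beta-1}{a(x)}l(x)\pmod 2$. The non-equality $f(x)\neq\tfrac{x^\beta-1}{a(x)}l(x)\pmod 2$ is precisely the condition that $(f(x),0)$ is not already absorbed into the submodule generated by $(l(x),g(x)+ua(x))$ alone, i.e.\ that both generators are genuinely needed. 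Finally, for the singly generated case, the element $(f(x),0)$ must arise as a multiple of $(l(x),g(x)+ua(x))$, which happens exactly when $\tfrac{x^\beta-1}{a(x)}l(x)\equiv 0\pmod{x^\alpha-1}$, giving the last divisibility.

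The main subtlety is verifying the structure of cyclic codes over $R$, which requires care because $R[x]/\langle x^\beta-1\rangle$ is not a PID when $\beta$ is even; this is the step where the conditions $a(x)\mid g(x)\mid(x^\beta-1)\pmod 2$ really get pinned down. Once that is in hand, the rest of the argument is a routine lifting-and-normalizing exercise. I would therefore cite or recall the ideal-structure theorem for $R[x]/\langle x^\beta-1\rangle$ explicitly and spend the bulk of the proof on the coupling between the two coordinates.
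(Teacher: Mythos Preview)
The paper does not supply its own proof of this theorem: the statement is imported verbatim from \cite{consta}, as indicated by the citation attached to it, so there is no in-paper argument to compare against. Your sketch---project onto the $R[x]/\langle x^\beta-1\rangle$ coordinate, identify the kernel with a binary cyclic code of length $\alpha$, invoke the single-generator form $\langle g+ua\rangle$ for cyclic codes over $R$, lift and reduce $l$ modulo $f$, and then read off the divisibility of $\tfrac{x^\beta-1}{a}l$ by $f$ by annihilating the second coordinate---is exactly the standard proof strategy used in \cite{consta} (and in the parallel $\mathbb{Z}_2\mathbb{Z}_4$ theory it was modeled on). One small point worth tightening: your interpretation of the inequality $f\neq\tfrac{x^\beta-1}{a}l$ as ``both generators are genuinely needed'' is morally right but slightly imprecise as stated; in \cite{consta} this condition, together with $\deg l<\deg f$, is what pins down uniqueness of the pair $(f,l)$, and the dichotomy between the two-generator and one-generator cases is phrased via whether $\tfrac{x^\beta-1}{a}l$ is a proper multiple of $f$ or is already $0$ modulo $x^\alpha-1$. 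Otherwise your outline matches the source argument.
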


Considering the theorem above, we can write the type of $\mathcal{C}=\langle
\left(f(x),0\right) ,\left( l(x),g(x)+ua(x)\right) \rangle$ in terms of the
degrees of the polynomials $f(x),~a(x)$ and $g(x)$. Let $\deg
f(x)=t_{1},~\deg g(x)=t_{2}$ and $\deg a(x)=t_{3}$. Then ${\mathcal{C}}$ is
of type
\begin{equation*}
\left( \alpha,\beta;\alpha-t_{4};\beta-t_{2},t_{2}+t_{4}-t_{1}-t_{3}\right)
\end{equation*}
where $d_{1}(x)=\gcd \left( f(x),\dfrac{x^{\beta}-1}{g(x)}l(x)\right) $ and $%
t_{4}=\deg d_{1}(x)$.

\begin{remark}
Let $f(x)=a_{0}+a_{1}x+\ldots +a_{r}x^{r}$ be a polynomial in $R[x]/\left(
x^{n}-1\right).$ And let us define $\widetilde{f(}x)=x^{n-1}f(1/x)$ and the
reciprocal of $f(x)$ to be $f^{\ast }(x)=x^{r}f(1/x\dot{)}.$ Hence, $%
\widetilde{f(}x)=x^{n-1-\deg f}$ $f^{\ast }(x).$ We use often $f$ instead of
$f(x)$ interchangeably throughout the paper. Also note that $%
\deg(f(x))=\deg(f^{\ast }(x))$ if $a_{0}\neq 0$.
\end{remark}

\begin{theorem}
\cite{consta} \label{perp} Let $V=\left( a_{0}a_{1}\ldots a_{\alpha
-1},b_{0}b_{1}\ldots b_{\beta -1}\right) ,~W=\left( d_{0}d_{1}\ldots
d_{\alpha -1},e_{0}e_{1}\ldots e_{\beta -1}\right) \in \mathbb{Z}
_{2}^{\alpha }R^{\beta }$ and $m=lcm(\alpha,\beta)$. Then $V$ is orthogonal
to $W$ and all its cyclic shifts if and only if
\begin{equation*}
G(x)=\left[ ua(x)\widetilde{d(}x)\bmod\left(x^{\alpha }-1\right)\right]
\left( \frac{x^{m}-1}{x^{\alpha }-1}\right) +\left[b(x)\widetilde{e(}x)\bmod%
(x^{\beta}-1) \right]\left( \frac{x^{m}-1}{x^{\beta }-1}\right) =0\bmod%
\left(x^{m}-1\right).
\end{equation*}
\end{theorem}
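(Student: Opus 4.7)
The plan is to turn the statement ``$V$ is orthogonal to $W$ and all its cyclic shifts'' into a single polynomial identity modulo $x^{m}-1$. Because $m=\mathrm{lcm}(\alpha ,\beta )$, a single cyclic shift of $W$ moves the $\mathbb{Z}_{2}$-part by one position modulo $\alpha $ and the $R$-part by one position modulo $\beta $ simultaneously, and after $m$ shifts $W$ returns to itself. Hence the hypothesis is equivalent to the $m$ scalar equations $\langle V,T^{i}(W)\rangle =0$ for $i=0,1,\ldots ,m-1$, and my goal is to encode exactly those $m$ equations as the coefficients of $G(x)\bmod (x^{m}-1)$.

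First I would spell out
\begin{equation*}
\langle V,T^{i}(W)\rangle \;=\;u\sum_{j=0}^{\alpha -1}a_{j}d_{(j-i)\bmod \alpha }\;+\;\sum_{j=0}^{\beta -1}b_{j}e_{(j-i)\bmod \beta },
\end{equation*}
and recognise each summand as a coefficient of a product of polynomials with one factor reversed. Using $\widetilde{d}(x)=x^{\alpha -1}d(1/x)$, a direct index calculation shows that the $\mathbb{Z}_{2}$-sum is (up to a fixed cyclic relabelling of $i$) a coefficient of $a(x)\widetilde{d}(x)\bmod (x^{\alpha }-1)$; the same argument applied to $\widetilde{e}(x)$ identifies the $R$-sum with a coefficient of $b(x)\widetilde{e}(x)\bmod (x^{\beta }-1)$. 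This is just the standard single-variable cyclic-orthogonality trick applied to each component separately.

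Next I would lift these two polynomial identities into the common ambient ring $R[x]/(x^{m}-1)$. The factor
\begin{equation*}
\frac{x^{m}-1}{x^{\alpha }-1}=1+x^{\alpha }+x^{2\alpha }+\cdots +x^{m-\alpha }
\end{equation*}
sends a polynomial in $\mathbb{Z}_{2}[x]/(x^{\alpha }-1)$ to its $\alpha $-periodic extension in $\mathbb{Z}_{2}[x]/(x^{m}-1)$, and $(x^{m}-1)/(x^{\beta }-1)$ does the analogous $\beta $-periodic extension for the $R$-part. After this lift, the coefficient of $x^{i}$ in the first summand of $G(x)$ records $u\sum_{j}a_{j}d_{(j-i)\bmod \alpha }$ and the coefficient of $x^{i}$ in the second summand records $\sum_{j}b_{j}e_{(j-i)\bmod \beta }$, for every $i\in \{0,1,\ldots ,m-1\}$.

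Adding the two lifts gives $[x^{i}]\,G(x)=\langle V,T^{i}(W)\rangle $, so $G(x)\equiv 0\pmod{x^{m}-1}$ if and only if all $m$ of these inner products vanish, which is exactly the assertion that $V$ is orthogonal to $W$ and to each of its cyclic shifts. The main obstacle I expect is purely bookkeeping: the reversal built into $\widetilde{(\cdot )}$ introduces a cyclic offset in the exponent $i$, and one must check that the two summands, after their respective periodic extensions, pair up to the \emph{same} coefficient $x^{i}$ so that the sum inside $G(x)$ corresponds to the sum inside $\langle V,T^{i}(W)\rangle $ rather than to two independently shifted versions.
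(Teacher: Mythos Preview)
The paper does not supply its own proof of this theorem; it is quoted from \cite{consta} and stated without argument. So there is nothing in the present paper to compare your proposal against.

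That said, your approach is exactly the standard one used for this kind of result and is correct in outline. Your worry about the offset is legitimate but resolves cleanly: a direct computation shows that the coefficient of $x^{s}$ in $G(x)\bmod (x^{m}-1)$ equals $\langle V,T^{s+1}(W)\rangle$ (the reversal in $\widetilde{(\cdot)}$ contributes a uniform shift by one in the exponent), and the \emph{same} shift occurs in both summands because $\widetilde{d}(x)=x^{\alpha-1}d(1/x)$ and $\widetilde{e}(x)=x^{\beta-1}e(1/x)$ are defined with the analogous conventions in their respective rings. After the periodic extensions by $(x^{m}-1)/(x^{\alpha}-1)$ and $(x^{m}-1)/(x^{\beta}-1)$, the coefficient at $x^{s}$ depends only on $s\bmod\alpha$ in the first summand and on $s\bmod\beta$ in the second, so both line up with the single shift index $s+1$. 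Since $s\mapsto s+1$ is a bijection on $\mathbb{Z}/m\mathbb{Z}$, the vanishing of all coefficients of $G(x)$ is equivalent to $\langle V,T^{i}(W)\rangle=0$ for all $i$. Your proposal would be complete once this index check is written out.
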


The following theorem gives the generator polynomials of the dual cyclic
code ${\mathcal{C}}^{\perp}$ of ${\mathbb{Z}_{2}\mathbb{Z}_{2}[u]}$-cyclic
code ${\mathcal{C}}$.

\begin{theorem}
\cite{consta} \label{dual} Let ${\mathcal{C}}=\left\langle \left( f,0\right)
,\left( l,g+ua\right) \right\rangle $ be a cyclic code in $R_{\alpha,\beta}$%
. Then the code
\begin{equation*}
{\mathcal{D}}=\left\langle \left( \left( \frac{x^{\alpha }-1}{d}\right)
,0\right) ,\left( \left( \alpha _{1}\left( Qh_{f}+d_{2}h_{f}\right) \right)
,\left( \theta +u\sigma \right) \right) \right\rangle
\end{equation*}
has type $\left( \alpha ,\beta ;\overline{k_{0}};\overline{k_{1}},\overline{%
k_{2}}\right) $ where
\begin{eqnarray*}
\overline{k_{0}} &=&t_{4} \\
\overline{k_{1}} &=&t_{1}+t_{3}-t_{4} \\
\overline{k_{2}} &=&t_{2}+t_{4}-t_{1}-t_{3}.
\end{eqnarray*}
Furthermore,
\begin{equation*}
{\mathcal{C}}^{\perp }=\left\langle \left(\widetilde{ \left( \frac{x^{\alpha
}-1}{d} \right)},0\right) ,\left( \widetilde{\left(\alpha
_{1}\left(Qh_{f}+d_{2}h_{f}\right) \right)},\widetilde{\left( \theta
+u\sigma \right)}\right) \right\rangle
\end{equation*}

where

\begin{eqnarray*}
d_{1}=\gcd\left(f,\dfrac{x^{\beta }-1}{g}l\right),~d=\gcd \left(
l,f\right),~f=d_{1}d_{2} \\
d_{1}=dQ,~\alpha _{1}l+\alpha_{2}f=d,~l=dd_{4},~\dfrac{x^{\beta }-1}{g}%
l=d_{1}v_{1} \\
\alpha _{3}\dfrac{x^{\beta }-1}{g}l+\alpha _{4}f=d_{1},~\left( \dfrac{%
x^{\beta }-1}{a}l\right) =f\beta _{1} \\
\alpha _{3}\dfrac{x^{\beta }-1}{g}\beta _{1}+\alpha _{4}\dfrac{x^{\beta }-1}{%
a}=\theta,~\alpha _{1}v_{1}+\alpha _{2}\dfrac{x^{\beta }-1}{g}d_{2}=\sigma
\end{eqnarray*}
and $t_{1}=\deg f(x),~t_{2}=\deg g(x),~t_{3}=\deg a(x)$ and $t_{4}=\deg
d_{1}(x)$.
\end{theorem}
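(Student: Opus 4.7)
The plan is to establish $\mathcal{C}^\perp = \widetilde{\mathcal{D}}$ in two stages: first exhibit the inclusion $\widetilde{\mathcal{D}} \subseteq \mathcal{C}^\perp$ by verifying that each generator of $\mathcal{D}$, after applying $\widetilde{\cdot}$, is orthogonal to every cyclic shift of each generator of $\mathcal{C}$; then close the argument with a counting check driven by the type of $\mathcal{D}$.

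First I would collect the auxiliary identities supplied in the statement: $f=d_1d_2$, $d_1=dQ$, $d=\alpha_1 l + \alpha_2 f$, $l=dd_4$, $\tfrac{x^\beta-1}{g}l=d_1v_1$, $\alpha_3\tfrac{x^\beta-1}{g}l+\alpha_4 f=d_1$, $\tfrac{x^\beta-1}{a}l=f\beta_1$, together with the definitions of $\theta$ and $\sigma$. These B\'ezout-style relations, combined with the fact that $h_f=\tfrac{x^\alpha-1}{f}$ is the check polynomial of $f$ (so $fh_f\equiv 0\pmod{x^\alpha-1}$), are what make $(\tfrac{x^\alpha-1}{d},0)$ and $(\alpha_1(Qh_f+d_2h_f),\theta+u\sigma)$ valid building blocks of a cyclic code. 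I would then invoke Theorem~\ref{perp} for each of the four bilinear pairings of the two tilde-generators of $\widetilde{\mathcal{D}}$ against the two generators $(f,0)$ and $(l,g+ua)$ of $\mathcal{C}$. Each check reduces, after the tilde manipulation $\widetilde{pq}\equiv x^N\widetilde{p}\widetilde{q}$, to an identity of the form $fh_f\equiv 0$ in $\mathbb{Z}_2[x]/(x^\alpha-1)$ or $g\cdot\tfrac{x^\beta-1}{g}\equiv 0$ in $R[x]/(x^\beta-1)$, supplemented by direct substitutions from the B\'ezout list.

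Next I would compute the type $(\alpha,\beta;\overline{k_0};\overline{k_1},\overline{k_2})$ of $\mathcal{D}$. The $\mathbb{Z}_2$-generator $\tfrac{x^\alpha-1}{d}$ has degree $\alpha-t_4$, contributing $\overline{k_0}=t_4$; analyzing $\theta+u\sigma$ in $R[x]/(x^\beta-1)$ yields $\overline{k_1}=t_1+t_3-t_4$ positions of $R$-rank and $\overline{k_2}=t_2+t_4-t_1-t_3$ positions of the $u$-torsion type. A direct arithmetic check then gives $|\mathcal{D}|=2^{t_1+t_2+t_3}$ and, using the type of $\mathcal{C}$ recorded earlier, $|\mathcal{C}|\cdot|\mathcal{D}|=2^{\alpha+2\beta}=|\mathbb{Z}_2^\alpha\times R^\beta|$. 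Since the tilde operator is a bijection preserving cardinality, $|\widetilde{\mathcal{D}}|=|\mathcal{C}^\perp|$, and the inclusion upgrades to equality.

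The main obstacle will be the type computation for $\mathcal{D}$: showing that the three listed generators are genuinely minimal, i.e.\ that no nontrivial $R$-linear combination collapses one generator into the others. This requires carefully tracking how the relations among $f,g,a,l,d,d_1,d_2$ interact with the $u$-torsion in $R[x]/(x^\beta-1)$, and in particular ruling out hidden dependencies between $\alpha_1(Qh_f+d_2h_f)$ and $\tfrac{x^\alpha-1}{d}$ in the $\mathbb{Z}_2$-coordinate (which could otherwise make $\overline{k_0}$ strictly smaller than $t_4$). Once this independence is secured, the orthogonality verification in the first stage is largely mechanical polynomial manipulation.
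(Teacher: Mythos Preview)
The paper does not supply its own proof of this theorem: it is quoted verbatim from \cite{consta} as a preliminary result, so there is no argument in the present paper to compare your proposal against. Your two-stage plan---verify $\widetilde{\mathcal{D}}\subseteq\mathcal{C}^\perp$ pairing-by-pairing via Theorem~\ref{perp}, then upgrade to equality by a cardinality count driven by the type of $\mathcal{D}$---is the standard route for results of this shape and is almost certainly what \cite{consta} does.

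One step in your sketch is inaccurate, however. You assert that ``the $\mathbb{Z}_2$-generator $\frac{x^\alpha-1}{d}$ has degree $\alpha-t_4$, contributing $\overline{k_0}=t_4$.'' But $t_4=\deg d_1$ with $d_1=dQ$, so $\deg\bigl(\frac{x^\alpha-1}{d}\bigr)=\alpha-\deg d$, which equals $\alpha-t_4$ only when $Q$ is a unit. More to the point, $\overline{k_0}$ is not read off from the degree of the first generator: by the type formula quoted just after Theorem~\ref{Main}, you must compute, for $\mathcal{D}$, the analogue of $d_1$, namely
\[
\gcd\Bigl(\tfrac{x^\alpha-1}{d},\ \tfrac{x^\beta-1}{\theta}\,\alpha_1(Q+d_2)h_f\Bigr),
\]
and show that its degree is $\alpha-t_4$. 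This is precisely the ``hidden dependency'' concern you raise at the end, but it is already the content of the $\overline{k_0}$ calculation rather than a separate minimality issue. The repair is routine B\'ezout bookkeeping using the listed identities, yet it is the genuine work in the type computation and cannot be bypassed by a degree count on $\frac{x^\alpha-1}{d}$ alone.
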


\section{The Structure of Self-Dual ${\mathbb{Z}_{2}\mathbb{Z}_{2}[u]}$-Cyclic Codes}

In this section we study the algebraic structure of self-dual ${\mathbb{Z}%
_{2}\mathbb{Z}_{2}[u]}$-cyclic codes using the generator polynomials of the
cyclic code ${\mathcal{C}}$ and its dual ${\mathcal{C}}^{\perp}$. Note that
the code ${\mathcal{C}}$ is self-orthogonal, if ${\mathcal{C}}\subseteq {%
\mathcal{C}}^{\perp}$ and self-dual, if ${\mathcal{C}}={\mathcal{C}}^{\perp}$.

\begin{lemma}
\label{Lemma1} Let $S$ be any commutative ring and  $C=\langle l,k\left( g+ua\right) \rangle $ be any $S$-module. If $k$ is a unit in the ring $S$ with inverse $s$ then ${\mathcal{C}}%
=\langle sl,g+ua\rangle $.
\end{lemma}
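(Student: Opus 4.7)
The plan is to prove the equality of submodules by double inclusion, using the fact that $sk=ks=1$ in $S$.

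First I would show that $C=\langle l, k(g+ua)\rangle \subseteq \langle sl, g+ua\rangle$. Since $s\in S$ and $l$ appears as a generator on the left, and $sl$ as a generator on the right, the key step is recovering the generator $l$ from $sl$: because $ks = 1$, we have $l = k\cdot (sl)$, which puts $l$ in $\langle sl, g+ua\rangle$. Similarly $k(g+ua) = k\cdot(g+ua) \in \langle sl, g+ua\rangle$. Since both generators of $C$ lie in the right-hand module, the inclusion follows.

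For the reverse inclusion $\langle sl, g+ua\rangle \subseteq \langle l, k(g+ua)\rangle$, the generator $sl$ is trivially $s\cdot l \in \langle l, k(g+ua)\rangle$, while the generator $g+ua$ is recovered as $g+ua = s\cdot k(g+ua) \in \langle l, k(g+ua)\rangle$, again using $sk=1$.

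There is no real obstacle here; the argument is essentially a bookkeeping check that being able to multiply a generator by a unit (and its inverse) does not change the submodule it spans together with other generators. The commutativity of $S$ is only used to commute $k$ and $s$ past module elements freely. The lemma should therefore follow immediately from these two short inclusions.
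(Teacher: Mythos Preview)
Your proposal is correct and follows essentially the same approach as the paper: a double inclusion obtained by multiplying generators by the unit $k$ and its inverse $s$. The paper's version is phrased slightly more compactly by treating $(l,k(g+ua))$ as a single pair and observing $k\cdot(sl,g+ua)=(l,k(g+ua))$ and $s\cdot(l,k(g+ua))=(sl,g+ua)$, but the content is identical to what you wrote.
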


\begin{proof}
Let $D=\langle sl,g+ua\rangle.$ Then $k\left( sl,g+ua\right)
=\left(l,k\left( g+ua\right) \right) $ and ${\mathcal{C}}\subseteq D.$ Also,
$s\left( l,k\left( g+ua\right) \right) =\left( sl,g+ua\right) $ and hence $%
D\subseteq {\mathcal{C}}.$ Therefore, ${\mathcal{C}}=D$.
\end{proof}


\begin{theorem}
Let ${\mathcal{C}}=\langle \left( f,0\right) ,\left( l,g+ua\right) \rangle $
be a ${\mathbb{Z}_{2}\mathbb{Z}_{2}[u]}$-cyclic code. Then

\begin{equation*}
{\mathcal{C}}^{\perp }=\left\langle \left( \left( \frac{x^{\alpha }-1}{d}%
\right) ^{\ast },0\right) ,\left( M,x^{i}\left( \theta ^{\ast
}+ux^{j-i}\sigma ^{\ast }\right) \right) \right\rangle
\end{equation*}

where $M=\widetilde{\left( \alpha _{1}\left( Qh_{f}+d_{2}h_{f}\right)
\right) },~i=\beta -1-\deg \theta ,~j=\beta -1-\deg \sigma$.
\end{theorem}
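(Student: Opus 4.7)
The plan is to take the expression for $\mathcal{C}^{\perp}$ already supplied by Theorem~\ref{dual} and simply rewrite its two generators in the reciprocal-polynomial form required by the statement. Theorem~\ref{dual} gives
\[
\mathcal{C}^{\perp} \;=\; \Bigl\langle \bigl(\widetilde{(x^{\alpha}-1)/d},\,0\bigr),\; \bigl(M,\,\widetilde{\theta+u\sigma}\bigr)\Bigr\rangle,
\]
where $M = \widetilde{\alpha_1(Qh_f + d_2 h_f)}$, so all that remains is to repackage $\widetilde{(x^{\alpha}-1)/d}$ as $((x^{\alpha}-1)/d)^{\ast}$ in the first generator and $\widetilde{\theta + u\sigma}$ as $x^i(\theta^{\ast}+u\,x^{j-i}\sigma^{\ast})$ in the second. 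The two facts I will use are the identity $\widetilde{p}(x) = x^{\,n-1-\deg p}\,p^{\ast}(x)$ from the Remark, and the observation that $x$ is a unit in both $\mathbb{Z}_2[x]/(x^{\alpha}-1)$ and $R[x]/(x^{\beta}-1)$ with respective inverses $x^{\alpha-1}$ and $x^{\beta-1}$.

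For the first generator, I apply the Remark with $n = \alpha$ to $(x^{\alpha}-1)/d$, which has degree $\alpha - \deg d$, to obtain $\widetilde{(x^{\alpha}-1)/d} = x^{\deg d -1}\bigl((x^{\alpha}-1)/d\bigr)^{\ast}$. Since the second coordinate of this generator is zero and $x^{\deg d -1}$ is a unit, multiplying the whole generator by $x^{\,1-\deg d}$ (the same elementary observation that underlies Lemma~\ref{Lemma1}) replaces it by $\bigl(((x^{\alpha}-1)/d)^{\ast},\,0\bigr)$ without changing the module. For the second generator I keep $M$ as written and unwind the second coordinate directly from the definition:
\[
\widetilde{\theta + u\sigma}(x) \;=\; x^{\beta-1}\theta(1/x) + u\,x^{\beta-1}\sigma(1/x) \;=\; x^i\theta^{\ast}(x) + u\,x^j\sigma^{\ast}(x),
\]
where $i = \beta - 1 - \deg\theta$ and $j = \beta - 1 - \deg\sigma$; factoring out $x^i$ produces $x^i(\theta^{\ast} + u\,x^{j-i}\sigma^{\ast})$, matching the claimed form.

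The only subtlety worth flagging, and the step I would double-check, is the factorization $x^i\theta^{\ast} + u\,x^j\sigma^{\ast} = x^i(\theta^{\ast} + u\,x^{j-i}\sigma^{\ast})$ when $j < i$; here the unit status of $x$ in $R[x]/(x^{\beta}-1)$ rescues the expression, since $x^{j-i}$ is then interpreted as $(x^{\beta-1})^{\,i-j}$ and the equality still holds as an identity in the quotient ring. Beyond this, the proof is a purely notational translation of Theorem~\ref{dual} through the Remark, so I do not expect any structural obstacle.
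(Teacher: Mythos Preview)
Your proposal is correct and follows essentially the same route as the paper: start from Theorem~\ref{dual}, convert the tilde operations to reciprocals via the Remark, and absorb the stray power of $x$ on the first generator using that $x$ is a unit. The only difference is your handling of the factorization $x^i\theta^\ast + u x^j\sigma^\ast = x^i(\theta^\ast + u x^{j-i}\sigma^\ast)$: you invoke the unit status of $x$ to make sense of $x^{j-i}$ when $j<i$, whereas the paper simply observes that $j\ge i$ always holds because $\sigma^\ast\mid\theta^\ast$ (so $\deg\sigma\le\deg\theta$), making the worry moot.
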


\begin{proof}
From Theorem \ref{dual} we have
\begin{eqnarray*}
{\mathcal{C}}^{\perp } &=&\left\langle \left( \widetilde{\left( \frac{%
x^{\alpha }-1 }{d}\right) },0\right) ,\left( M,\left( \widetilde{\theta
+u\sigma }\right) \right) \right\rangle \\
&=&\left\langle \left( x^{\deg(d)-1}\left( \frac{x^{\alpha }-1}{d}\right)
^{\ast },0\right) ,\left( M,x^{\beta -1-\deg \theta }\theta ^{\ast
}+ux^{\beta -1-\deg \sigma }\sigma ^{\ast }\right) \right\rangle.
\end{eqnarray*}

Also note that
\begin{equation*}
\left( x^{\deg(d)-1}\left( \frac{x^{\alpha }-1}{d}\right) ^{\ast},0\right)
=\left( \left( \frac{x^{\alpha }-1}{d}\right) ^{\ast },0\right).
\end{equation*}
Therefore,
\begin{eqnarray*}
{\mathcal{C}}^{\perp } &=&\left\langle \left( x^{\deg(d)-1}\left( \frac{%
x^{\alpha }-1}{d} \right) ^{\ast },0\right) ,\left( M,x^{\beta -1-\deg
\theta }\theta ^{\ast }+ux^{\beta -1-\deg \sigma }\sigma ^{\ast }\right)
\right\rangle \\
&=&\left\langle \left( \left( \frac{x^{\alpha }-1}{d} \right) ^{\ast
},0\right) ,\left( M,x^{\beta -1-\deg \theta }\theta ^{\ast }+ux^{\beta
-1-\deg \sigma }\sigma ^{\ast }\right) \right\rangle \\
&=&\left\langle \left( \left( \frac{x^{\alpha }-1}{d}\right) ^{\ast
},0\right) ,\left( M,x^{i}\theta ^{\ast }+ux^{j}\sigma ^{\ast }\right)
\right\rangle.
\end{eqnarray*}
Hence

\begin{equation*}
{\mathcal{C}}^{\perp }=\left\langle \left( \left( \frac{x^{\alpha }-1}{d}%
\right) ^{\ast },0\right) ,\left( M,x^{i}\left( \theta ^{\ast
}+ux^{j-i}\sigma ^{\ast }\right) \right) \right\rangle
\end{equation*}

with $M=\widetilde{\left( \alpha _{1}\left( Qh_{f}+d_{2}h_{f}\right)\right) }%
,~i=\beta -1-\deg \theta^{\ast} ,~j=\beta -1-\deg \sigma^{\ast}$. Here, $%
j\geq i$ because $\sigma ^{\ast }|$ $\theta ^{\ast }.$
\end{proof}

Now, let $m=lcm\left(\alpha,\beta \right).$ Since $x^{i}$ is a unit then by
Lemma \ref{Lemma1}, we get that
\begin{equation*}
{\mathcal{C}}^{\perp}=\left\langle\left(\left(\frac{x^{\alpha }-1}{d}%
\right)^{\ast},0\right),\left( Mx^{m-i},\theta ^{\ast }+ux^{j-i}\sigma
^{\ast}\right) \right\rangle.
\end{equation*}%
Consider the cyclic code ${\mathcal{C}}_{1}=\left( \theta
^{\ast}+ux^{j-i}\sigma ^{\ast }\right) $ in $R[x]/\left( x^{\beta
}-1\right). $ Let $h=\theta ^{\ast }+ux^{j-i}\sigma ^{\ast }.$ Then $%
uh=u\theta ^{\ast }\in {\mathcal{C}}_{1}$ and $\left( \dfrac{x^{\beta }-1}{%
\theta ^{\ast }}\right) h=\dfrac{x^{\beta }-1}{\theta ^{\ast }}%
ux^{j-i}\sigma ^{\ast }\in {\mathcal{C}}_{1}.$ Since $\beta $ is odd then $%
\left( x^{\beta }-1\right) $ factors uniquely into irreducible polynomials.
Hence, $\gcd \left( \dfrac{x^{\beta }-1}{\theta ^{\ast }},\theta ^{\ast
}\right) =1.$ Hence,
\begin{eqnarray*}
\dfrac{x^{\beta }-1}{\theta ^{\ast }}f_{1}+\theta ^{\ast }f_{2} &=&1 \\
ux^{j-i}\sigma ^{\ast } &=&\dfrac{x^{\beta }-1}{\theta ^{\ast }}%
ux^{j-i}\sigma ^{\ast }f_{1}+u\sigma ^{\ast }x^{j-i}\theta ^{\ast }f_{2}\in {%
\mathcal{C}}_{1}.
\end{eqnarray*}%
Therefore $u\sigma ^{\ast }\in {\mathcal{C}}_{1}$ and hence $\theta ^{\ast
}\in {\mathcal{C}}_{1}$. So, ${\mathcal{C}}_{1}=\left( \theta ^{\ast
}+ux^{j-i}\sigma ^{\ast }\right) =\left( \theta ^{\ast }+u\sigma ^{\ast
}\right) .$ This implies that $\theta ^{\ast }+ux^{j-i}\sigma ^{\ast
}=q_{1}\left( \theta ^{\ast }+u\sigma ^{\ast }\right) $ and $\theta ^{\ast
}+u\sigma ^{\ast }=q_{2}\left( \theta ^{\ast }+ux^{j-i}\sigma ^{\ast
}\right) .$ So, $\theta ^{\ast }+ux^{j-i}\sigma ^{\ast }=q_{1}q_{2}\left(
\theta ^{\ast }+ux^{j-i}\sigma ^{\ast }\right) $ which implies that $q_{1}$
and $q_{2}$ must be unit and $q_{1}^{-1}=q_{2}.$

In summary we have
\begin{eqnarray*}
{\mathcal{C}}^{\perp } &=&\left\langle \left( \left( \frac{x^{\alpha }-1}{d}%
\right) ^{\ast },0\right) ,\left( Mx^{m-i},\theta ^{\ast }+ux^{j-i}\sigma
^{\ast }\right) \right\rangle \\
&=&\left\langle \left( \left( \frac{x^{\alpha }-1}{d}\right) ^{\ast
},0\right) ,\left( Mx^{m-i},q_{1}\left( \theta ^{\ast }+u\sigma ^{\ast
}\right) \right) \right\rangle.
\end{eqnarray*}
By Lemma \ref{Lemma1}, we give the following theorem.

\begin{theorem}
Let ${\mathcal{C}}=\left\langle \left( f,0\right) ,\left( l,g+ua\right)
\right\rangle $ be a ${\mathbb{Z}_{2}\mathbb{Z}_{2}[u]}$-cyclic code. Then
\begin{eqnarray*}
{\mathcal{C}}^{\perp } &=&\left\langle \left( \left( \frac{x^{\alpha }-1}{d}%
\right) ^{\ast },0\right) ,\left( Mx^{m-i}q_{2},q_{1}\left( \theta ^{\ast
}+u\sigma ^{\ast }\right) \right) \right\rangle \\
&=&\left\langle \left( \left( \frac{x^{\alpha }-1}{d}\right) ^{\ast
},0\right) ,\left( Q,\theta ^{\ast }+u\sigma ^{\ast }\right) \right\rangle
\end{eqnarray*}
where $Q=$ $Mx^{m-i}q_{2}~mod\left( \dfrac{x^{\alpha }-1}{d}\right) ^{\ast
},~M=\widetilde{\left( \alpha _{1}\left( Qh_{f}+d_{2}h_{f}\right) \right) }%
,~i=\beta -1-\deg \theta^{\ast} ,~j=\beta -1-\deg \sigma^{\ast} $ and $%
\theta ^{\ast }+u\sigma ^{\ast }=q_{2}\left( \theta ^{\ast }+ux^{j-i}\sigma
^{\ast }\right).$
\end{theorem}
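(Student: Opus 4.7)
The plan is to pick up exactly where the discussion preceding the theorem leaves off, namely with the expression
\[
\mathcal{C}^{\perp} = \left\langle \left(\left(\tfrac{x^{\alpha}-1}{d}\right)^{\ast},0\right), \left(Mx^{m-i},\, q_{1}(\theta^{\ast}+u\sigma^{\ast})\right)\right\rangle.
\]
Everything that remains is to perform two simplifications on the second generator, both of which follow directly from Lemma \ref{Lemma1} and the observation that we may alter either generator by an element of the ideal spanned by the other.

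First, I would invoke Lemma \ref{Lemma1} on the second generator with the role of $k$ played by $q_{1}$. Since the paragraph before the theorem establishes that $q_{1}$ and $q_{2}$ are mutually inverse units in $R[x]/(x^{\beta}-1)$, Lemma \ref{Lemma1} lets us scale by $q_{2}$ to replace $\bigl(Mx^{m-i},\,q_{1}(\theta^{\ast}+u\sigma^{\ast})\bigr)$ by $\bigl(Mx^{m-i}q_{2},\,\theta^{\ast}+u\sigma^{\ast}\bigr)$. Here $q_{2}$ acts on the first coordinate through its image in $\mathbb{Z}_{2}[x]/(x^{\alpha}-1)$, i.e.\ via the natural $R[x]$-module structure in which $u$ kills the first coordinate; this is exactly the setting of Lemma \ref{Lemma1}.

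Second, I would reduce the first coordinate modulo $\left(\tfrac{x^{\alpha}-1}{d}\right)^{\ast}$. Because $\left(\left(\tfrac{x^{\alpha}-1}{d}\right)^{\ast},0\right)$ is already a generator of $\mathcal{C}^{\perp}$, we may subtract any polynomial multiple of it from the second generator without changing the ideal. Setting $Q := Mx^{m-i}q_{2}\ \bmod\ \left(\tfrac{x^{\alpha}-1}{d}\right)^{\ast}$ therefore converts the second generator into $\bigl(Q,\,\theta^{\ast}+u\sigma^{\ast}\bigr)$, which is precisely the claimed form.

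The proof is essentially bookkeeping built on top of the previous theorem, so I do not expect a genuine obstacle. The only point that needs care is confirming that the ``scalar'' $q_{2}$, which a priori lives in $R[x]/(x^{\beta}-1)$, may be interpreted consistently on the first coordinate; this is fine because $\mathbb{Z}_{2}[x]/(x^{\alpha}-1) \times R[x]/(x^{\beta}-1)$ is an $R[x]$-module with $u$ acting as zero on the left factor, so the hypothesis of Lemma \ref{Lemma1} applies to the ambient ring. Once that is acknowledged, the two successive applications (unit rescaling, then reduction modulo the first generator) yield the stated presentation of $\mathcal{C}^{\perp}$ immediately.
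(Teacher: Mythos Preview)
Your proposal is correct and follows essentially the same approach as the paper: the paper's entire justification for this theorem is the single line ``By Lemma \ref{Lemma1}, we give the following theorem,'' and you have simply unpacked that invocation by applying the unit rescaling by $q_{2}$ and then reducing the first coordinate modulo the other generator. Your remark about how $q_{2}$ acts on the $\mathbb{Z}_{2}[x]/(x^{\alpha}-1)$ factor is a useful clarification that the paper leaves implicit.
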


\begin{theorem}
Let ${\mathcal{C}}=\left\langle \left( f,0\right) ,\left( l,g+ua\right)
\right\rangle$ be a ${\mathbb{Z}_{2}\mathbb{Z}_{2}[u]}$-cyclic code. Then ${%
\mathcal{C}}$ is self-dual iff $f=\left( \frac{x^{\alpha }-1}{d}\right)
^{\ast },~l=Q,~g=\theta ^{\ast } $ and $a=\sigma ^{\ast }.$
\end{theorem}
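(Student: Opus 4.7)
The plan is to compare the canonical presentation of $\mathcal{C}$ given in Theorem \ref{Main} with the canonical presentation of $\mathcal{C}^{\perp}$ obtained in the preceding theorem, namely
\[
\mathcal{C}^{\perp} = \left\langle \left( \left(\tfrac{x^{\alpha}-1}{d}\right)^{\ast},0 \right), \left( Q,\theta^{\ast}+u\sigma^{\ast} \right) \right\rangle,
\]
and to argue that self-duality is equivalent to the two generator tuples matching coordinate by coordinate.

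The ``if'' direction is essentially a substitution: if $f=(\tfrac{x^{\alpha}-1}{d})^{\ast}$, $l=Q$, $g=\theta^{\ast}$ and $a=\sigma^{\ast}$, then the two displayed presentations coincide generator for generator, so $\mathcal{C}=\mathcal{C}^{\perp}$. Before performing the substitution I would verify that the tuple $((\tfrac{x^{\alpha}-1}{d})^{\ast}, Q, \theta^{\ast}, \sigma^{\ast})$ satisfies all the divisibility and degree side conditions of Theorem \ref{Main}. The unstarred versions of these conditions are built into Theorem \ref{dual}, and because $\alpha$ and $\beta$ are odd, $x^{\alpha}-1$ and $x^{\beta}-1$ factor into distinct irreducible polynomials over $\mathbb{Z}_2$, so the reciprocal operation $h\mapsto h^{\ast}$ respects divisibility factor by factor.

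For the ``only if'' direction, suppose $\mathcal{C}=\mathcal{C}^{\perp}$. Then $\mathcal{C}$ admits two presentations in the canonical form of Theorem \ref{Main}, one with parameters $(f,l,g,a)$ and one with parameters $((\tfrac{x^{\alpha}-1}{d})^{\ast}, Q, \theta^{\ast}, \sigma^{\ast})$. I would then extract each canonical parameter as an intrinsic invariant of $\mathcal{C}$. First, the set of codewords of the form $(c(x),0)$ is a binary cyclic ideal in $\mathbb{Z}_{2}[x]/\langle x^{\alpha}-1\rangle$, hence principal with a unique monic divisor of $x^{\alpha}-1$ as generator; this forces $f=(\tfrac{x^{\alpha}-1}{d})^{\ast}$. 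Next, the projection of $\mathcal{C}$ onto the second coordinate is a cyclic code over $R$ generated by $g+ua$, whose canonical factors $g$ and $a$ are uniquely determined by the structure theory of cyclic codes over $\mathbb{Z}_{2}+u\mathbb{Z}_{2}$, giving $g=\theta^{\ast}$ and $a=\sigma^{\ast}$. Finally, with $f$, $g$, $a$ fixed, the polynomial $l$ is pinned down uniquely by the degree bound $\deg(l)<\deg(f)$, so $l=Q$.

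The main obstacle is the uniqueness argument in the ``only if'' direction: Theorem \ref{Main} as stated gives existence of the canonical presentation but does not spell out uniqueness, so one has to identify each canonical parameter as an invariant of $\mathcal{C}$ (the binary subcode in the first coordinate, the $R$-projection onto the second, and the reduction of the mixed generator modulo $f$). Once these invariants are in hand, the theorem follows by matching them against the explicit expressions for the dual parameters.
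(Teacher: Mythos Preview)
The paper gives no proof of this theorem at all; it is stated as an immediate consequence of the preceding result, which already exhibits $\mathcal{C}^{\perp}$ in the canonical form $\langle ((\tfrac{x^{\alpha}-1}{d})^{\ast},0),(Q,\theta^{\ast}+u\sigma^{\ast})\rangle$, together with the uniqueness assertion built into Theorem~\ref{Main} (``$\mathcal{C}$ is identified \emph{uniquely} as\ldots''). Your strategy is exactly this, so it matches the paper's implicit reasoning.

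Two small corrections to your write-up. First, you say the uniqueness in Theorem~\ref{Main} is not spelled out, but in fact the theorem statement asserts uniqueness explicitly; so your ``main obstacle'' is already handled by the cited reference, and the invariant-by-invariant extraction you outline, while a nice way to see why uniqueness holds, is not needed here. Second, you claim $\alpha$ is odd in order to justify that $h\mapsto h^{\ast}$ respects divisibility; this is incorrect in the self-dual setting (the very next corollary shows $\alpha=2t_{4}$, and all of the paper's examples have $\alpha$ even). Only $\beta$ is assumed odd. Fortunately this does not damage the argument: the side conditions for the dual generators are established in the construction of Theorem~\ref{dual} and the preceding theorem, not via a parity argument on $\alpha$, so you can simply drop that remark.
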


\begin{corollary}
Let ${\mathcal{C}}$ be a self-dual ${\mathbb{Z}_{2}\mathbb{Z}_{2}[u]}$%
-cyclic code in $R_{\alpha,\beta}$ generated by $\left(\left(f,0\right),%
\left(l,g+ua\right)\right)$. Then

\begin{equation*}
\beta+\frac{\alpha}{2}=t_{1}+t_{2}+t_{3}
\end{equation*}

where $t_{1}=\deg f(x),~t_{2}=\deg g(x),~t_{3}=\deg a(x)$.
\end{corollary}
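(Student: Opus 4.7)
The plan is to exploit the fact that self-duality forces a numerical equality between the type of $\mathcal{C}$ and the type of $\mathcal{C}^\perp$. From the discussion following Theorem~\ref{Main}, a code of the stated form has type
\[
\left(\alpha,\beta;\alpha-t_{4};\beta-t_{2},t_{2}+t_{4}-t_{1}-t_{3}\right),
\]
while Theorem~\ref{dual} says that $\mathcal{C}^{\perp}$ has type $(\alpha,\beta;\overline{k_{0}};\overline{k_{1}},\overline{k_{2}})$ with $\overline{k_{0}}=t_{4}$, $\overline{k_{1}}=t_{1}+t_{3}-t_{4}$, $\overline{k_{2}}=t_{2}+t_{4}-t_{1}-t_{3}$. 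If $\mathcal{C}=\mathcal{C}^{\perp}$ then in particular the two codes have the same cardinality and the same underlying abelian group structure, so their type parameters must coincide.

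First I would match the $\mathbb{Z}_{2}$-part: $\alpha-t_{4}=t_{4}$, which gives $t_{4}=\alpha/2$ (and in passing shows that $\alpha$ must be even for any self-dual code to exist in this family). Next I would match the free $R$-part parameter: $\beta-t_{2}=t_{1}+t_{3}-t_{4}$. Substituting $t_{4}=\alpha/2$ into the second equation and rearranging immediately yields
\[
\beta+\frac{\alpha}{2}=t_{1}+t_{2}+t_{3},
\]
which is the desired identity. The third parameter $k_{2}=\overline{k_{2}}$ is already an identity and provides no new information, so it serves only as a consistency check.

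A slightly more axiomatic alternative is to use a cardinality argument: the ambient module $\mathbb{Z}_{2}^{\alpha}\times R^{\beta}$ has $2^{\alpha+2\beta}$ elements, the inner product is non-degenerate, and $|\mathcal{C}|\cdot|\mathcal{C}^{\perp}|=2^{\alpha+2\beta}$, so self-duality forces $|\mathcal{C}|=2^{\alpha/2+\beta}$. Since a code of type $(\alpha,\beta;k_{0};k_{1},k_{2})$ has $|\mathcal{C}|=2^{k_{0}+2k_{1}+k_{2}}$, substituting the type of $\mathcal{C}$ from above and simplifying gives the same relation. I expect no genuine obstacle here: the result is essentially a bookkeeping consequence of Theorems~\ref{Main} and~\ref{dual}, and the only subtle point is making sure one is entitled to equate the type triples (rather than just the total cardinality), which is clear because $\mathcal{C}$ and $\mathcal{C}^\perp$ are equal as $R$-modules and thus have the same invariant-factor decomposition.
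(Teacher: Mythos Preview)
Your proof is correct and follows exactly the paper's approach: equate the type parameters of $\mathcal{C}$ and $\mathcal{C}^{\perp}$ using Theorem~\ref{dual}, deduce $\alpha=2t_{4}$ from $k_{0}=\overline{k_{0}}$ and $\beta=t_{1}+t_{2}+t_{3}-t_{4}$ from $k_{1}=\overline{k_{1}}$, and combine. Your alternative cardinality argument is a nice consistency check but is not needed and does not appear in the paper.
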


\begin{proof}
Since ${\mathcal{C}}$ is self-dual, then the code ${\mathcal{C}}$ and its
dual have the same type. Hence, we have
\begin{eqnarray*}
&&k_{0}=\alpha -t_{4}=\overline{k_{0}}=t_{4}\Rightarrow \alpha =2t_{4} \\
&&k_{1}=\beta -t_{2}=\overline{k_{1}}=t_{1}+t_{3}-t_{4}\Rightarrow \beta
=t_{1}+t_{2}+t_{3}-t_{4} \\
&&\beta +\frac{\alpha }{2}=t_{1}+t_{2}+t_{3}.
\end{eqnarray*}
\end{proof}

We know from \cite{consta} that if $\mathcal{C}=\langle \left( f(x),0\right)
,\left( l(x),g(x)+ua(x)\right) \rangle $ is a cyclic code in $%
R_{\alpha,\beta}$ with $f(x)|\left(x^{\alpha }-1\right)(\bmod~2),$ and $%
a(x)|g(x)|\left( x^{\beta }-1\right) \left(\bmod~2\right),$ and $l(x)$ is a
binary polynomial satisfying $\deg (l(x))<\deg (f(x)),$ $f(x)|\left( \dfrac{%
x^{\beta }-1}{a(x)} \right) l(x)\left(\bmod~u \right)$ and $f(x)\neq\left(
\dfrac{x^{\beta }-1}{ a(x)}\right)l(x)\left(\bmod~u\right)$ and $%
f(x)h_{f}(x)=x^{\alpha}-1$, $g(x)h_{g}(x)=x^{\beta}-1$, $g(x)=a(x)b(x)$,
then
\begin{equation*}
S_1=\bigcup_{i=0}^{\alpha-deg(f)-1}\left\{x^i \ast \left(
f(x),0\right)\right\},
\end{equation*}

\begin{equation*}
S_{2}=\bigcup_{i=0}^{\beta-deg(g)-1}\left\{ x^{i}\ast
\left(l(x),g(x)+ua(x)\right) \right\}
\end{equation*}

and
\begin{equation*}
S_{3}=\bigcup_{i=0}^{deg(g)-deg(a)-1}\left\{ x^{i}\ast
\left(h_g(x)l(x),uh_{g}(x)a(x)\right) \right\} .
\end{equation*}
\begin{equation*}
S=S_{1}\cup S_{2}\cup S_{3}
\end{equation*}
forms a minimal spanning set for $\mathcal{C}$.

Similarly, we give the following theorem that determines the minimal
spanning sets for the dual cyclic code ${\mathcal{C}}^{\perp}$.

\begin{theorem}
Let ${\mathcal{C}}^{\perp }=\left\langle \left( \left( \frac{x^{\alpha }-1}{d%
}\right)^{\ast },0\right) ,\left( Q,\theta ^{\ast }+u\sigma ^{\ast }\right)
\right\rangle$. Then
\begin{equation*}
T_{1}=\bigcup_{i=0}^{\alpha -\deg(h_{d}^{\ast })-1}\left\{ x^{i}\ast \left(
h_{d}^{\ast },0\right) \right\} ,
\end{equation*}

\begin{equation*}
T_{2}=\bigcup_{i=0}^{\beta -\deg (\theta ^{\ast })-1}\left\{ x^{i}\ast
\left( Q,\theta ^{\ast }+u\sigma ^{\ast }\right) \right\}
\end{equation*}%
and
\begin{equation*}
T_{3}=\bigcup_{i=0}^{\deg (\theta ^{\ast })-\deg (\sigma ^{\ast })-1}\left\{
x^{i}\ast \left( J^{\ast }Q,uJ^{\ast }\sigma ^{\ast }\right) \right\}
\end{equation*}%
forms a minimal spanning set for ${\mathcal{C}}^{\perp }$, where $%
h_{d}^{\ast }=\left( \frac{x^{\alpha }-1}{d}\right) ^{\ast }$, and $J\theta
=x^{\beta }-1$. Furthermore, $\mathcal{{\mathcal{C}}^{\perp }}$ has $%
2^{\alpha -\deg (h_{d}^{\ast })}4^{\beta -\deg (\theta ^{\ast })}2^{\deg
(\theta ^{\ast })-\deg (\sigma ^{\ast })}$ codewords.
\end{theorem}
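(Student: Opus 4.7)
The plan is to reproduce, step for step, the argument that produced the minimal spanning set $S=S_{1}\cup S_{2}\cup S_{3}$ for $\mathcal{C}$ recalled just above the theorem. The dual $\mathcal{C}^{\perp}$ has exactly the same shape as $\mathcal{C}$: one generator $\left(h_{d}^{\ast},0\right)$ of ``pure binary'' type and one generator $\left(Q,\theta^{\ast}+u\sigma^{\ast}\right)$ of ``mixed'' type, with the divisibility $\sigma^{\ast}\mid\theta^{\ast}$ (noted at the end of the previous theorem) playing the role that $a\mid g$ played for $\mathcal{C}$. So the strategy is to transplant that proof rather than reinvent it; the main content is to verify that the ingredients the transplant needs are in place.

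First I would check that each $T_{k}$ sits inside $\mathcal{C}^{\perp}$. For $T_{1}$ and $T_{2}$ this is immediate, as cyclic shifts of generators. For $T_{3}$, since $J\theta=x^{\beta}-1$ and we are in characteristic two, the reciprocal relation gives $J^{\ast}\theta^{\ast}\equiv 0\pmod{x^{\beta}-1}$. Hence $J^{\ast}\cdot(Q,\theta^{\ast}+u\sigma^{\ast})=(J^{\ast}Q,\,uJ^{\ast}\sigma^{\ast})$, and all of its cyclic shifts lie in $\mathcal{C}^{\perp}$.

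Next I would show that $T=T_{1}\cup T_{2}\cup T_{3}$ spans $\mathcal{C}^{\perp}$. An arbitrary element has the form $p(x)\cdot(h_{d}^{\ast},0)+q(x)\cdot(Q,\theta^{\ast}+u\sigma^{\ast})$ with $p\in\mathbb{Z}_{2}[x]/(x^{\alpha}-1)$ and $q\in R[x]/(x^{\beta}-1)$. Because $d\mid(x^{\alpha}-1)$ we have $h_{d}^{\ast}\mid(x^{\alpha}-1)$, so the standard basis for a binary cyclic code expresses $p\cdot h_{d}^{\ast}$ as a $\mathbb{Z}_{2}$-combination of $T_{1}$. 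For the mixed generator, decompose $q=\bar{q}+u\tilde{q}$ with $\bar{q},\tilde{q}\in\mathbb{Z}_{2}[x]$ and divide $\bar{q}=sJ^{\ast}+r$ with $\deg r<\deg J^{\ast}=\beta-\deg\theta^{\ast}$: the $r$-part contributes to the span of $T_{2}$, while the $sJ^{\ast}$-part leaves only a $u$-term in the second coordinate and so is handled by $T_{3}$, the residual contribution in the first coordinate being cleared by $T_{1}$. The $u\tilde{q}$-part reduces analogously using $\sigma^{\ast}\mid\theta^{\ast}$, so that everything is written as an $R$-combination of $T$.

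Finally I would count. The three subsets correspond to three independent sources of freedom: $T_{1}$ contributes $2^{\alpha-\deg h_{d}^{\ast}}$ choices in the first coordinate; reducing the second coordinate modulo $u$, $T_{2}$ contributes a classical binary cyclic code of size $2^{\beta-\deg\theta^{\ast}}$ there; and the $u$-layer in the second coordinate is an independent $\mathbb{Z}_{2}$-space of total dimension $\beta-\deg\sigma^{\ast}$, of which $\beta-\deg\theta^{\ast}$ dimensions already come from the $u\sigma^{\ast}$ part of $T_{2}$, leaving the remaining $\deg\theta^{\ast}-\deg\sigma^{\ast}$ dimensions to be contributed by $T_{3}$. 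Multiplying these gives $|\mathcal{C}^{\perp}|=2^{\alpha-\deg h_{d}^{\ast}}\cdot 4^{\beta-\deg\theta^{\ast}}\cdot 2^{\deg\theta^{\ast}-\deg\sigma^{\ast}}$, and matching the independent degrees of freedom to the elements of $T$ establishes minimality. The main obstacle I anticipate is the bookkeeping in the spanning step: one must check carefully that the overflow produced in the first coordinate when the mixed generator is multiplied by high-degree polynomials can always be absorbed by $T_{1}$, which requires using the analogue for the dual of the divisibility condition $f\mid\bigl(\tfrac{x^{\beta}-1}{a}\bigr)l$ from Theorem \ref{Main}. Establishing this compatibility is the only place where the proof is not a purely syntactic copy of the one for $\mathcal{C}$.
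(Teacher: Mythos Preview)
Your proposal is correct and is essentially the approach the paper takes: the paper's entire proof is ``See Theorem 14 in \cite{consta}'', i.e.\ it observes that $\mathcal{C}^{\perp}$ has the same two-generator shape as $\mathcal{C}$ (with $h_{d}^{\ast}$, $Q$, $\theta^{\ast}$, $\sigma^{\ast}$ in place of $f$, $l$, $g$, $a$) and invokes the general spanning-set result already established for such codes. Your write-up just unpacks that citation, and the one nontrivial point you flag---that the first-coordinate overflow is absorbable by $T_{1}$---is exactly the content of that cited theorem applied to the dual's generators.
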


\begin{proof}
See Theorem 14 in \cite{consta}
\end{proof}

Hence, we can conclude with the following results.

\begin{corollary}
If ${\mathcal{C}}$ is a self-dual ${\mathbb{Z}_{2}\mathbb{Z}_{2}[u]}$-cyclic
code then,
\begin{eqnarray*}
\alpha -deg(h_{d}) &=&\alpha -deg(f)\Rightarrow \alpha -(\alpha
-deg(d))=\alpha -deg(f) \\
&\Rightarrow &\alpha =deg(f)+deg(d) \\
deg(g) &=&deg(\theta )\Rightarrow deg(g)=\beta +deg(d_{1})-deg(f)-deg(a) \\
deg(a) &=&deg(\sigma )\Rightarrow deg(a)=\beta +deg(d)-deg(d_{1})-deg(g).
\end{eqnarray*}
\end{corollary}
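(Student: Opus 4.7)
The plan is to read the three claimed identities as the conditions $|S_i|=|T_i|$ for $i=1,2,3$, where $S_i$ and $T_i$ are the minimal spanning sets for $\mathcal{C}$ and $\mathcal{C}^{\perp}$ displayed immediately before the corollary. Because $\mathcal{C}=\mathcal{C}^{\perp}$, the two canonical spanning sets must have matching cardinalities piece by piece, and each equality will unwind into one line of the corollary once explicit formulas for $\deg(h_d)$, $\deg(\theta)$, and $\deg(\sigma)$ are in hand.

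For the first line, $|S_1|=|T_1|$ reads $\alpha-\deg(f)=\alpha-\deg(h_d^{\ast})$. Using the remark that $\deg(p^{\ast})=\deg(p)$ when $p(0)\neq 0$ (which holds for $h_d=(x^{\alpha}-1)/d$ since $d\mid x^{\alpha}-1$), I rewrite $\deg(h_d^{\ast})=\deg(h_d)=\alpha-\deg(d)$, and the first line pops out as $\alpha=\deg(f)+\deg(d)$.

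For the remaining two lines I first need closed forms for $\deg(\theta)$ and $\deg(\sigma)$, which I obtain by computing the type of $\mathcal{C}^{\perp}$ in two ways. Theorem~\ref{dual} gives it as $(\alpha,\beta;t_4;t_1+t_3-t_4,t_2+t_4-t_1-t_3)$, whereas applying the type formula of Theorem~\ref{Main} to the canonical presentation $\langle(h_d^{\ast},0),(Q,\theta^{\ast}+u\sigma^{\ast})\rangle$ gives it as $(\alpha,\beta;\alpha-T_4;\beta-\deg(\theta),\deg(\theta)+T_4-\deg(h_d^{\ast})-\deg(\sigma))$ for the associated gcd-invariant $T_4$. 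Equating the middle coordinates yields $\deg(\theta)=\beta+\deg(d_1)-\deg(f)-\deg(a)$, and equating the last coordinates (after using $T_4=\alpha-\deg(d_1)$ from the first coordinate and substituting the just-found formula for $\deg(\theta)$) gives $\deg(\sigma)=\beta+\deg(d)-\deg(d_1)-\deg(g)$.

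With these formulas available, $|S_2|=|T_2|$ collapses to $\deg(g)=\deg(\theta)$, which together with the expression for $\deg(\theta)$ is the second line; and $|S_3|=|T_3|$ combined with $\deg(g)=\deg(\theta)$ collapses to $\deg(a)=\deg(\sigma)$, which together with the expression for $\deg(\sigma)$ is the third line. The only delicate point is keeping the reciprocal $^{\ast}$ from muddying the degree comparisons, but every polynomial in sight divides $x^{\alpha}-1$ or $x^{\beta}-1$ and therefore has nonzero constant term, reducing each step to straightforward linear bookkeeping.
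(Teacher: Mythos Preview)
Your approach is correct and matches what the paper intends. The paper gives no explicit proof for this corollary---it simply writes ``Hence, we can conclude with the following results'' immediately after the minimal spanning set theorem for $\mathcal{C}^{\perp}$---so your reading of the three identities as $|S_i|=|T_i|$ (equivalently, as the degree equalities coming from the self-duality characterization $f=h_d^{\ast}$, $g=\theta^{\ast}$, $a=\sigma^{\ast}$ in the preceding theorem) is exactly the intended argument. Your additional step of deriving the closed forms $\deg(\theta)=\beta+\deg(d_1)-\deg(f)-\deg(a)$ and $\deg(\sigma)=\beta+\deg(d)-\deg(d_1)-\deg(g)$ by computing the type of $\mathcal{C}^{\perp}$ in two ways fills in detail that the paper omits entirely.

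One small remark: the assertion that ``the two canonical spanning sets must have matching cardinalities piece by piece'' does not follow from $|\mathcal{C}|=|\mathcal{C}^{\perp}|$ alone; it relies on the uniqueness of the canonical form asserted in Theorem~\ref{Main} (or equivalently on the self-duality theorem giving $f=h_d^{\ast}$, $g=\theta^{\ast}$, $a=\sigma^{\ast}$ directly). You are clearly using this, but it is worth making the dependence explicit.
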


\begin{corollary}
If ${\mathcal{C}}$ is a self-dual ${\mathbb{Z}_{2}\mathbb{Z}_{2}[u]}$-cyclic
code then it is orthogonal to itself. Hence we have
\begin{eqnarray*}
&&uf\widetilde{f}\frac{x^{m}-1}{x^{\alpha }-1} =0\left( \bmod~x^{m}-1\right)
\Rightarrow (x^{\alpha }-1)|f\widetilde{f}\Rightarrow
h_{f}|x^{\alpha-\deg(f)-1}f^{\ast } \\
&&uf\widetilde{l}\frac{x^{m}-1}{x^{\alpha }-1} =0\left( \bmod~x^{m}-1\right)
\Rightarrow (x^{\alpha }-1)|f\widetilde{l}\Rightarrow
h_{f}|x^{\alpha-\deg(l)-1}l^{\ast } \\
&&ul\widetilde{l }\frac{x^{m}-1}{x^{\alpha }-1}+(g+ua)\widetilde{(g+ua)}%
\frac{x^{m}-1}{ x^{\beta }-1} =0\left( \bmod~x^{m}-1\right) \\
&&ul\widetilde{l}\frac{x^{m}-1}{x^{\alpha }-1}+[g\widetilde{g }+u(a%
\widetilde{g }+g\widetilde{a })]\frac{x^{m}-1}{x^{\beta }-1} =0\left( \bmod%
~x^{m}-1\right) \\
&&ul\widetilde{l}\frac{x^{m}-1}{x^{\alpha }-1}+u(a\widetilde{g}+g\widetilde{%
a })\frac{x^{m}-1}{x^{\beta }-1}+g\widetilde{g }\frac{x^{m}-1}{x^{\beta }-1}
=0\left( \bmod~x^{m}-1\right)\Rightarrow \\
&&(x^{\beta }-1)|g\widetilde{g}\Rightarrow h_{g}|x^{\beta-\deg(g)-1}g^{\ast
} ~\text{and} \\
&&(x^{\alpha }-1)|l\widetilde{l} \Rightarrow (x^{\alpha
}-1)|x^{\alpha-\deg(l)-1}ll^{\ast }~\text{and}~(x^{\beta }-1)|(a\widetilde{g}%
+g\widetilde{a})~\text{or}~ \\
&&l\widetilde{l}\frac{x^{m}-1}{x^{\alpha }-1} =(a\widetilde{g}+g\widetilde{a}%
) \frac{x^{m}-1}{x^{\beta }-1}\left( \bmod~x^{m}-1\right) \Rightarrow \\
&&l\widetilde{l}\left(x^{\beta }-1\right) =(a\widetilde{g}+g\widetilde{a}%
)\left(x^{\alpha }-1\right)\left( \bmod~x^{m}-1\right).
\end{eqnarray*}
\end{corollary}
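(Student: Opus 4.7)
The plan is to exploit the containment $\mathcal{C}\subseteq\mathcal{C}^\perp$ coming from self-duality: every generator of $\mathcal{C}$ must be orthogonal to every other generator together with all their cyclic shifts. With $\mathcal{C}=\langle(f,0),(l,g+ua)\rangle$ there are three pairs of generators to test, and for each I would apply Theorem \ref{perp} directly, then translate the resulting congruence on $R[x]/(x^m-1)$ into the divisibility statements listed in the corollary.

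For $(f,0)$ paired with $(f,0)$, the second summand in the formula of Theorem \ref{perp} vanishes and only $uf\widetilde{f}\,\frac{x^m-1}{x^\alpha-1}\equiv 0\pmod{x^m-1}$ survives, forcing $(x^\alpha-1)\mid f\widetilde{f}$. Writing $x^\alpha-1=f\,h_f$ and $\widetilde{f}=x^{\alpha-\deg f-1}f^{\ast}$, cancellation of $f$ in the integral domain $\mathbb{Z}_2[x]$ yields $h_f\mid x^{\alpha-\deg f-1}f^{\ast}$. The identical calculation applied to the mixed pair $(f,0)$ versus $(l,g+ua)$ produces $h_f\mid x^{\alpha-\deg l-1}l^{\ast}$.

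The substantive step is the pair $(l,g+ua)$ against itself. Theorem \ref{perp} here gives
\begin{equation*}
ul\widetilde{l}\,\frac{x^m-1}{x^\alpha-1}+(g+ua)\widetilde{(g+ua)}\,\frac{x^m-1}{x^\beta-1}\equiv 0\pmod{x^m-1}.
\end{equation*}
Using $u^2=0$, I would expand $(g+ua)\widetilde{(g+ua)}=g\widetilde{g}+u(a\widetilde{g}+g\widetilde{a})$. Since $R$ decomposes additively as $\mathbb{Z}_2\oplus u\mathbb{Z}_2$, this congruence splits into its mod-$u$ part and its $u$-coefficient part. The mod-$u$ part reads $g\widetilde{g}\,\frac{x^m-1}{x^\beta-1}\equiv 0\pmod{x^m-1}$, giving $(x^\beta-1)\mid g\widetilde{g}$ and hence $h_g\mid x^{\beta-\deg g-1}g^{\ast}$. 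The $u$-coefficient part becomes
\begin{equation*}
l\widetilde{l}\,\frac{x^m-1}{x^\alpha-1}+(a\widetilde{g}+g\widetilde{a})\,\frac{x^m-1}{x^\beta-1}\equiv 0\pmod{x^m-1},
\end{equation*}
which either splits termwise, producing $(x^\alpha-1)\mid l\widetilde{l}$ together with $(x^\beta-1)\mid a\widetilde{g}+g\widetilde{a}$, or, after clearing denominators, rewrites as $l\widetilde{l}\,(x^\beta-1)=(a\widetilde{g}+g\widetilde{a})(x^\alpha-1)\pmod{x^m-1}$.

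The only real obstacle is the clean separation of the mod-$u$ and $u$-coefficient parts in the third congruence; once that decomposition is made, every item in the statement is a routine rewriting of a polynomial divisibility using $\widetilde{f}(x)=x^{\alpha-1-\deg f}f^{\ast}(x)$ together with the check-polynomial identities $f h_f=x^\alpha-1$ and $g h_g=x^\beta-1$.
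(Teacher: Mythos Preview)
Your proposal is correct and mirrors exactly what the paper does: the corollary in the paper carries no separate proof block because the chain of implications displayed in its statement \emph{is} the argument, and that chain is obtained precisely by applying Theorem~\ref{perp} to the three generator pairs $(f,0)$--$(f,0)$, $(f,0)$--$(l,g+ua)$, and $(l,g+ua)$--$(l,g+ua)$, then separating the free part from the $u$-part in $R=\mathbb{Z}_2\oplus u\mathbb{Z}_2$. Your write-up in fact makes the logic more explicit than the paper, particularly the justification for the $\bmod\,u$ / $u$-coefficient splitting in the third congruence.
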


\subsection{Examples of Self-Dual ${\mathbb{Z}_{2}\mathbb{Z}_{2}[u]}$-Cyclic
Codes}

In this section of the paper, we give some examples of self-dual ${\mathbb{Z}%
_{2}\mathbb{Z}_{2}[u]}$-cyclic codes. Furthermore, we list some of them with
their binary parameters in Table \ref{table}.

\begin{example}
Let $\mathcal{C}$ be a $\mathbb{Z}_{2}\mathbb{Z}_{2}[u]$-cyclic code in $%
\mathbb{Z}_{2}[x]/\langle x^{14}-1\rangle\times R[x]/\langle x^{21}-1\rangle$
and let $\mathcal{C}$ is generated by $\left((f(x),0),(l(x),g(x)+ua(x))%
\right) $ with
\begin{eqnarray*}
f(x) &=&1+x+x^3+x^7+x^8+x^{10}, \\
g(x) &=&1+x+x^2+x^3+x^7+x^9+x^{11}+x^{12}, \\
a(x) &=&1 + x^2 + x^4 + x^5 + x^6, \\
l(x) &=&1+x^3+x^5+x^6.
\end{eqnarray*}
Therefore, we can calculate the following polynomials.
\begin{eqnarray*}
f(x)h_{f}(x) &=&x^{14}-1\Rightarrow h_{f}(x)=1 + x + x^2 + x^4, \\
g(x)h_{g}(x) &=&x^{21}-1\Rightarrow h_{g}(x)=1 + x + x^4 + x^5 + x^7 + x^8 +
x^9, \\
a(x)h_{a}(x) &=&x^{21}-1\Rightarrow
h_{a}(x)=1+x^2+x^5+x^8+x^9+x^{12}+x^{14}+x^{15}, \\
d(x)&=&gcd(f(x),l(x))=1 + x^2 + x^3 + x^4,
d_{1}(x)=gcd(f(x),h_{g}(x)l(x))=x^7-1, \\
f(x)&=&d_{1}(x)d_{2}(x)\Rightarrow d_{2}(x)=1 + x + x^3,
d_{1}(x)=d(x)Q(x)\Rightarrow Q(x)=1+ x^2 + x^3, \\
d(x)&=&\alpha_{1}(x)l(x)+\alpha_{2}(x)f(x)\Rightarrow\alpha_{1}(x)=1 + x +
x^4 + x^5, \alpha_{2}(x)=x, \\
d_{1}(x)&=&\alpha_{3}(x)h_{g}(x)l(x)+\alpha_{4}(x)f(x)\Rightarrow%
\alpha_{3}(x)=1+x^2, \alpha_{4}(x)=x + x^4 + x^7, \\
h_{a}(x)l(x)&=&\beta_{1}(x)f(x)\Rightarrow
\beta_{1}(x)=1+x+x^4+x^6+x^8+x^{11}, \\
h_{g}(x)l(x)&=&\vartheta_{1}(x)d_{1}(x)\Rightarrow
\vartheta_{1}(x)=1+x+x^3+x^8, \\
\theta(x)&=&\alpha_{3}(x)\beta_{1}(x)h_{g}(x)+\alpha_{4}(x)h_{a}(x)%
\Rightarrow \theta(x)=1+x+x^3+x^5+x^9+x^{10}+x^{11}+x^{12}, \\
\sigma(x)&=&\alpha_{1}(x)\vartheta_{1}(x)+\alpha_{2}(x)d_{2}(x)h_{g}(x)
\Rightarrow \sigma(x)=1 + x + x^2 + x^4 + x^6.
\end{eqnarray*}
Hence, using the generator sets above, we can write the generator and
parity-check matrices for $\mathcal{C}$ as follows.
\begin{eqnarray*}
G=\left(
\begin{smallmatrix}
1 & 1 & 0 & 1 & 0 & 0 & 0 & 1 & 1 & 0 & 1 & 0 & 0 & 0 & 0 & 0 & 0 & 0 & 0 & 0
& 0 & 0 & 0 & 0 & 0 & 0 & 0 & 0 & 0 & 0 & 0 & 0 & 0 & 0 & 0 \\
0 & 1 & 1 & 0 & 1 & 0 & 0 & 0 & 1 & 1 & 0 & 1 & 0 & 0 & 0 & 0 & 0 & 0 & 0 & 0
& 0 & 0 & 0 & 0 & 0 & 0 & 0 & 0 & 0 & 0 & 0 & 0 & 0 & 0 & 0 \\
0 & 0 & 1 & 1 & 0 & 1 & 0 & 0 & 0 & 1 & 1 & 0 & 1 & 0 & 0 & 0 & 0 & 0 & 0 & 0
& 0 & 0 & 0 & 0 & 0 & 0 & 0 & 0 & 0 & 0 & 0 & 0 & 0 & 0 & 0 \\
0 & 0 & 0 & 1 & 1 & 0 & 1 & 0 & 0 & 0 & 1 & 1 & 0 & 1 & 0 & 0 & 0 & 0 & 0 & 0
& 0 & 0 & 0 & 0 & 0 & 0 & 0 & 0 & 0 & 0 & 0 & 0 & 0 & 0 & 0 \\
1 & 0 & 0 & 1 & 0 & 1 & 1 & 0 & 0 & 0 & 0 & 0 & 0 & 0 & 1+u & 1 & 1+u & 1 & u
& u & u & 1 & 0 & 1 & 0 & 1 & 1 & 0 & 0 & 0 & 0 & 0 & 0 & 0 & 0 \\
0 & 1 & 0 & 0 & 1 & 0 & 1 & 1 & 0 & 0 & 0 & 0 & 0 & 0 & 0 & 1+u & 1 & 1+u & 1
& u & u & u & 1 & 0 & 1 & 0 & 1 & 1 & 0 & 0 & 0 & 0 & 0 & 0 & 0 \\
0 & 0 & 1 & 0 & 0 & 1 & 0 & 1 & 1 & 0 & 0 & 0 & 0 & 0 & 0 & 0 & 1+u & 1 & 1+u
& 1 & u & u & u & 1 & 0 & 1 & 0 & 1 & 1 & 0 & 0 & 0 & 0 & 0 & 0 \\
0 & 0 & 0 & 1 & 0 & 0 & 1 & 0 & 1 & 1 & 0 & 0 & 0 & 0 & 0 & 0 & 0 & 1+u & 1
& 1+u & 1 & u & u & u & 1 & 0 & 1 & 0 & 1 & 1 & 0 & 0 & 0 & 0 & 0 \\
0 & 0 & 0 & 0 & 1 & 0 & 0 & 1 & 0 & 1 & 1 & 0 & 0 & 0 & 0 & 0 & 0 & 0 & 1+u
& 1 & 1+u & 1 & u & u & u & 1 & 0 & 1 & 0 & 1 & 1 & 0 & 0 & 0 & 0 \\
0 & 0 & 0 & 0 & 0 & 1 & 0 & 0 & 1 & 0 & 1 & 1 & 0 & 0 & 0 & 0 & 0 & 0 & 0 &
1+u & 1 & 1+u & 1 & u & u & u & 1 & 0 & 1 & 0 & 1 & 1 & 0 & 0 & 0 \\
0 & 0 & 0 & 0 & 0 & 0 & 1 & 0 & 0 & 1 & 0 & 1 & 1 & 0 & 0 & 0 & 0 & 0 & 0 & 0
& 1+u & 1 & 1+u & 1 & u & u & u & 1 & 0 & 1 & 0 & 1 & 1 & 0 & 0 \\
0 & 0 & 0 & 0 & 0 & 0 & 0 & 1 & 0 & 0 & 1 & 0 & 1 & 1 & 0 & 0 & 0 & 0 & 0 & 0
& 0 & 1+u & 1 & 1+u & 1 & u & u & u & 1 & 0 & 1 & 0 & 1 & 1 & 0 \\
1 & 0 & 0 & 0 & 0 & 0 & 0 & 0 & 1 & 0 & 0 & 1 & 0 & 1 & 0 & 0 & 0 & 0 & 0 & 0
& 0 & 0 & 1+u & 1 & 1+u & 1 & u & u & u & 1 & 0 & 1 & 0 & 1 & 1 \\
1 & 0 & 0 & 1 & 0 & 0 & 0 & 1 & 0 & 0 & 1 & 0 & 0 & 0 & u & u & u & u & 0 & u
& u & u & 0 & 0 & u & u & 0 & u & 0 & u & 0 & 0 & 0 & 0 & 0 \\
0 & 1 & 0 & 0 & 1 & 0 & 0 & 0 & 1 & 0 & 0 & 1 & 0 & 0 & 0 & u & u & u & u & 0
& u & u & u & 0 & 0 & u & u & 0 & u & 0 & u & 0 & 0 & 0 & 0 \\
0 & 0 & 1 & 0 & 0 & 1 & 0 & 0 & 0 & 1 & 0 & 0 & 1 & 0 & 0 & 0 & u & u & u & u
& 0 & u & u & u & 0 & 0 & u & u & 0 & u & 0 & u & 0 & 0 & 0 \\
0 & 0 & 0 & 1 & 0 & 0 & 1 & 0 & 0 & 0 & 1 & 0 & 0 & 1 & 0 & 0 & 0 & u & u & u
& u & 0 & u & u & u & 0 & 0 & u & u & 0 & u & 0 & u & 0 & 0 \\
1 & 0 & 0 & 0 & 1 & 0 & 0 & 1 & 0 & 0 & 0 & 1 & 0 & 0 & 0 & 0 & 0 & 0 & u & u
& u & u & 0 & u & u & u & 0 & 0 & u & u & 0 & u & 0 & u & 0 \\
0 & 1 & 0 & 0 & 0 & 1 & 0 & 0 & 1 & 0 & 0 & 0 & 1 & 0 & 0 & 0 & 0 & 0 & 0 & u
& u & u & u & 0 & u & u & u & 0 & 0 & u & u & 0 & u & 0 & u%
\end{smallmatrix}
\right)
\end{eqnarray*}

\begin{equation*}
H=\left(
\begin{smallmatrix}
0 & 0 & 0 & 1 & 1 & 0 & 1 & 0 & 0 & 0 & 1 & 1 & 0 & 1 & 0 & 0 & 0 & 0 & 0 & 0
& 0 & 0 & 0 & 0 & 0 & 0 & 0 & 0 & 0 & 0 & 0 & 0 & 0 & 0 & 0 \\
1 & 0 & 0 & 0 & 1 & 1 & 0 & 1 & 0 & 0 & 0 & 1 & 1 & 0 & 0 & 0 & 0 & 0 & 0 & 0
& 0 & 0 & 0 & 0 & 0 & 0 & 0 & 0 & 0 & 0 & 0 & 0 & 0 & 0 & 0 \\
0 & 1 & 0 & 0 & 0 & 1 & 1 & 0 & 1 & 0 & 0 & 0 & 1 & 1 & 0 & 0 & 0 & 0 & 0 & 0
& 0 & 0 & 0 & 0 & 0 & 0 & 0 & 0 & 0 & 0 & 0 & 0 & 0 & 0 & 0 \\
1 & 0 & 1 & 0 & 0 & 0 & 1 & 1 & 0 & 1 & 0 & 0 & 0 & 1 & 0 & 0 & 0 & 0 & 0 & 0
& 0 & 0 & 0 & 0 & 0 & 0 & 0 & 0 & 0 & 0 & 0 & 0 & 0 & 0 & 0 \\
0 & 0 & 1 & 0 & 0 & 1 & 1 & 1 & 1 & 1 & 0 & 1 & 1 & 0 & 0 & 0 & 0 & 0 & 0 & 0
& 0 & 0 & 1 & 1 & 1 & 1 & 0 & 0 & u & 1 & u & 1 & u & 1+u & 1+u \\
0 & 0 & 0 & 1 & 0 & 0 & 1 & 1 & 1 & 1 & 1 & 0 & 1 & 1 & 1+u & 0 & 0 & 0 & 0
& 0 & 0 & 0 & 0 & 1 & 1 & 1 & 1 & 0 & 0 & u & 1 & u & 1 & u & 1+u \\
1 & 0 & 0 & 0 & 1 & 0 & 0 & 1 & 1 & 1 & 1 & 1 & 0 & 1 & 1+u & 1+u & 0 & 0 & 0
& 0 & 0 & 0 & 0 & 0 & 1 & 1 & 1 & 1 & 0 & 0 & u & 1 & u & 1 & u \\
1 & 1 & 0 & 0 & 0 & 1 & 0 & 0 & 1 & 1 & 1 & 1 & 1 & 0 & u & 1+u & 1+u & 0 & 0
& 0 & 0 & 0 & 0 & 0 & 0 & 1 & 1 & 1 & 1 & 0 & 0 & u & 1 & u & 1 \\
0 & 1 & 1 & 0 & 0 & 0 & 1 & 0 & 0 & 1 & 1 & 1 & 1 & 1 & 1 & u & 1+u & 1+u & 0
& 0 & 0 & 0 & 0 & 0 & 0 & 0 & 1 & 1 & 1 & 1 & 0 & 0 & u & 1 & u \\
1 & 0 & 1 & 1 & 0 & 0 & 0 & 1 & 0 & 0 & 1 & 1 & 1 & 1 & u & 1 & u & 1+u & 1+u
& 0 & 0 & 0 & 0 & 0 & 0 & 0 & 0 & 1 & 1 & 1 & 1 & 0 & 0 & u & 1 \\
1 & 1 & 0 & 1 & 1 & 0 & 0 & 0 & 1 & 0 & 0 & 1 & 1 & 1 & 1 & u & 1 & u & 1+u
& 1+u & 0 & 0 & 0 & 0 & 0 & 0 & 0 & 0 & 1 & 1 & 1 & 1 & 0 & 0 & u \\
1 & 1 & 1 & 0 & 1 & 1 & 0 & 0 & 0 & 1 & 0 & 0 & 1 & 1 & u & 1 & u & 1 & u &
1+u & 1+u & 0 & 0 & 0 & 0 & 0 & 0 & 0 & 0 & 1 & 1 & 1 & 1 & 0 & 0 \\
1 & 1 & 1 & 1 & 0 & 1 & 1 & 0 & 0 & 0 & 1 & 0 & 0 & 1 & 0 & u & 1 & u & 1 & u
& 1+u & 1+u & 0 & 0 & 0 & 0 & 0 & 0 & 0 & 0 & 1 & 1 & 1 & 1 & 0 \\
0 & 0 & 1 & 1 & 0 & 0 & 0 & 0 & 0 & 1 & 1 & 0 & 0 & 0 & 0 & 0 & 0 & 0 & u & u
& u & u & 0 & u & u & u & 0 & 0 & u & u & 0 & u & 0 & u & 0 \\
0 & 0 & 0 & 1 & 1 & 0 & 0 & 0 & 0 & 0 & 1 & 1 & 0 & 0 & 0 & 0 & 0 & 0 & 0 & u
& u & u & u & 0 & u & u & u & 0 & 0 & u & u & 0 & u & 0 & u \\
0 & 0 & 0 & 0 & 1 & 1 & 0 & 0 & 0 & 0 & 0 & 1 & 1 & 0 & u & 0 & 0 & 0 & 0 & 0
& u & u & u & u & 0 & u & u & u & 0 & 0 & u & u & 0 & u & 0 \\
0 & 0 & 0 & 0 & 0 & 1 & 1 & 0 & 0 & 0 & 0 & 0 & 1 & 1 & 0 & u & 0 & 0 & 0 & 0
& 0 & u & u & u & u & 0 & u & u & u & 0 & 0 & u & u & 0 & u \\
1 & 0 & 0 & 0 & 0 & 0 & 1 & 1 & 0 & 0 & 0 & 0 & 0 & 1 & u & 0 & u & 0 & 0 & 0
& 0 & 0 & u & u & u & u & 0 & u & u & u & 0 & 0 & u & u & 0 \\
1 & 1 & 0 & 0 & 0 & 0 & 0 & 1 & 1 & 0 & 0 & 0 & 0 & 0 & 0 & u & 0 & u & 0 & 0
& 0 & 0 & 0 & u & u & u & u & 0 & u & u & u & 0 & 0 & u & u%
\end{smallmatrix}
\right)
\end{equation*}
So, it can be easily checked that $H$ is orthogonal to $G$ and their
cardinalities are the same and equal to $2^{4}\cdot 2^{2.9}\cdot 2^{6}=2^{28}
$. It can be also checked that both $G$ and $H$ are orthogonal to
themselves. Then $\mathcal{C}$ is a self-dual ${\mathbb{Z}_{2}\mathbb{Z}%
_{2}[u]}$-cyclic code of type $(14,21;7;9,3)$.
\end{example}

\begin{table}[h!]
\caption{Table of self-dual ${\mathbb{Z}_{2}\mathbb{Z}_{2}[u]}$-cyclic
codes.}
\label{table}\centering
\par
{\rowcolors{3}{red!15!}{cyan!15!}
\begin{tabularx}{\textwidth}{||X||c||c||}
\hline\hline
Generators & $\mathbb{Z}_{2}\mathbb{Z}_{2}[u]-$type & Binary Image  \\ \hline\hline
$f(x)=1+x+x^3+x^7+x^8+x^{10},~l(x)=1+x^3+x^5+x^6,~g(x)=1+x^3+x^6+x^{12},~a(x)=1+x+x^2+x^4+x^6$ & $
[14,21;7;9,3]$ & $[56,28,6]$ \\ \hline\hline
$f(x)=1+x^2+x^3+x^7+x^9+x^{10},~l(x)=1+x+x^3+x^6,~g(x)=1+x+x^4+x^8+x^9+x^{11}+x^{12}+x^{14}+x^{17}+x^{18}+x^{19}+x^{20},~a(x)=1+x+x^2+x^4+x^7+x^8+x^9+x^{10}+x^{12}$ & $[14,35;7;15,15]$ & $[84,52,6]$  \\ \hline\hline
$f(x)=1+x+x^3+x^{14}+x^{15}+x^{17},~l(x)=1+x^3+x^5+x^6+x^7+x^{10}+x^{12}+x^{13}~,g(x)=1+x^{10}+x^{15}+x^{20},~a(x)=1+x^2+x^3+x^4+x^5+x^8+x^{10}+x^{11}+x^{12}$
& $[28,35;7;15,5]$ & $[98,42,6]$  \\ \hline\hline
\end{tabularx}
}
\end{table}

\section{Quantum Codes Over $\Z_{2}^{\alpha}\times R^{\beta}$}

In this section, we use cyclic codes over $\Z_{2}^{\alpha}\times R^{\beta}$ to obtain self-orthogonal codes over ${\mathbb{Z}}_{2}$. Using these self-orthogonal
codes, we determine the parameters of the corresponding quantum codes.

Recall that, for $r_{1}+uq_{1}=a\in R$, $%
r_{1},q_{1}\in \mathbb{Z}_{2}$, the Gray map is defined  as follows.
\begin{align*}
& \qquad \qquad \qquad \qquad \Phi : \mathbb{Z}_{2}^{\alpha }\times R^{\beta
}\rightarrow \mathbb{Z}_{2}^{n}\text{ } \\
& \Phi \left( x_{0},\ldots x_{\alpha -1},r_{0}+uq_{0},\ldots r_{\beta
-1}+uq_{\beta -1}\right) \\
& \qquad \qquad =\left( x_{0},\ldots x_{r-1},q_{0},\ldots ,q_{\beta
-1},r_{0}+q_{0},\ldots ,r_{\beta -1}+q_{\beta -1}\right)
\end{align*}
where $n=\alpha +2\beta$. The map $\Phi $ is an isometry which transforms
the Lee distance in $\mathbb{\ Z}_{2}^{\alpha }\times R^{\beta }$ to the
Hamming distance in $\mathbb{Z}_{2}^{n}$. The Hamming weight of any codeword
is defined as the number of its nonzero entries. The Hamming distance
between two codewords is the Hamming weight of their difference.
Additionally, the Lee distance for the codes on a $q$-ary alphabet is the
Lee weight of their differences where the Lee weight of an element $a\in
\mathbb{Z}_q$ is defined as the minimum absolute distance from the zero
element to $a$ or $q{-}a$. Furthermore, we can define the Lee weights of
elements in $R$ as, $wt_{L}(0)=0,~wt_{L}(1)=1,~wt_{L}(u)=2$ and $%
wt_{L}(1+u)=1$.

Moreover, for any $\mathbb{Z}_{2}\mathbb{Z}_{2}[u]$-linear code $\mathcal{C}%
, $ we have that $\Phi \left(\mathcal{C}\right) $ is a binary linear code as
well. This property is not valid for the $\mathbb{Z}_{2}\mathbb{Z}_{4}-$%
additive codes. For any codeword $v=(v_{1},v_2)\in {\mathcal{C}}$, we always
have
\begin{equation*}
wt(v)=wt_{H}(v_{1})+wt_{L}(v_{2})
\end{equation*}
where $wt_{H}(v_{1})$ is the Hamming of weight of $v_{1}$ and $wt_{L}(v_{2})$
is the Lee weight of $v_{2}.$

The binary image $C=\Phi (\mathcal{C})$ of a $\mathbb{Z}_{2}\mathbb{Z}%
_{2}[u] $-linear code $\mathcal{C}$ of type $\left( \alpha ,\beta
;k_{0},k_{1},k_{2}\right) $ is a binary linear code of length $n=\alpha
+2\beta $ and size $2^{2k_{1}+k_{0}+k_{2}}$. It is also called a $\mathbb{Z}%
_{2}\mathbb{Z}_{2}[u]$-linear code.

Let $v$ and $w$ are two codewords of a $\mathbb{Z}_{2}\mathbb{Z}_{2}[u]$%
-linear code. These codewords may be orthogonal to each other but the binary
parts may not be orthogonal. For example, $\left( 1~1|1+u~u\right) $ and $%
\left( 0~1|u~u\right) $ are orthogonal in $\mathbb{Z}_{2}^2 \times R^2$
whereas the binary or $R$-components are not orthogonal. However in \cite%
{ism}, it was proved the binary image of a self-dual ${\mathbb{Z}_{2}%
\mathbb{Z}_{2}[u]}$-linear code under $\Phi$ is again a self-dual. So, $\Phi
$ preserves orthogonality. Therefore, we can use this result to obtain a
quantum code via CSS method. The following theorem defines the
Calderbank-Shor-Steane (CSS)code construction.

\begin{theorem}\label{css}
\cite{shor,steane} Let ${\mathcal{C}}_{1}$ and ${\mathcal{C}}_{2}$ be two
linear codes with parameters $[n, k_1, d_1]_q$ and $[n,k_2,d_2]_q$ such that
${\mathcal{C}}_{2}\subseteq {\mathcal{C}}_{1}$. Then there exists a $[[n,
k_1-k_2, d]]_q$ stabilizer code with minimum distance $d$ that is pure to $%
min\{d_1,d_{2}^{\perp}\}$ where $d_{2}^{\perp}$ is the minimum distance of
the dual code ${\mathcal{C}}_{2}^{\perp}$.
\end{theorem}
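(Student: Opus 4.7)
The plan is to construct a stabilizer group explicitly from ${\mathcal{C}}_1$ and ${\mathcal{C}}_2$ and verify that the resulting stabilizer code has the claimed parameters. Let $H_1$ be a parity-check matrix of ${\mathcal{C}}_1$ of size $(n-k_1)\times n$ and let $G_2$ be a generator matrix of ${\mathcal{C}}_2$ of size $k_2 \times n$. Introduce the stabilizer $S$ generated by the $Z$-type operators $Z(h)$ for each row $h$ of $H_1$, together with the $X$-type operators $X(g)$ for each row $g$ of $G_2$, giving $(n-k_1)+k_2$ generators in total.

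First I would verify that $S$ is abelian. Two $Z$-type (or two $X$-type) generators commute automatically; $Z(h)$ and $X(g)$ commute precisely when $h\cdot g=0$ in $\mathbb{F}_q$. Since the rows of $G_2$ span ${\mathcal{C}}_2\subseteq {\mathcal{C}}_1$ and the rows of $H_1$ annihilate ${\mathcal{C}}_1$, we have $G_2 H_1^{T}=0$, so $S$ is abelian. The $(n-k_1)+k_2$ generators are independent, so $|S|=q^{(n-k_1)+k_2}$, and the simultaneous $+1$ eigenspace has dimension $q^{n-((n-k_1)+k_2)}=q^{k_1-k_2}$, encoding $k_1-k_2$ qudits.

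Next, for the distance, I would identify the logical operators with cosets. An operator $X(a)Z(b)$ lies in the normalizer $N(S)$ iff $X(a)$ commutes with every $Z(h)$ and $Z(b)$ commutes with every $X(g)$, i.e.\ $a\in {\mathcal{C}}_1$ and $b\in {\mathcal{C}}_2^{\perp}$, while $S$ itself corresponds to $a\in {\mathcal{C}}_2$ and $b\in {\mathcal{C}}_1^{\perp}$. Hence $N(S)/S$ is parametrised by ${\mathcal{C}}_1/{\mathcal{C}}_2 \oplus {\mathcal{C}}_2^{\perp}/{\mathcal{C}}_1^{\perp}$, and any non-trivial logical operator has weight at least the minimum weight of a non-zero vector of ${\mathcal{C}}_1\setminus {\mathcal{C}}_2$ or ${\mathcal{C}}_2^{\perp}\setminus {\mathcal{C}}_1^{\perp}$. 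This yields $d\ge \min\{d_1,d_2^{\perp}\}$, and purity to $\min\{d_1,d_2^{\perp}\}$ follows because every non-identity element of $S$ lies either in ${\mathcal{C}}_1^{\perp}\subseteq {\mathcal{C}}_2^{\perp}$ (weight $\ge d_2^{\perp}$) or in ${\mathcal{C}}_2\subseteq {\mathcal{C}}_1$ (weight $\ge d_1$).

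The main obstacle is formalising the coset identification of the logical operators cleanly in the mixed $X$/$Z$ Pauli framework without dragging in heavy stabilizer-formalism preliminaries. Since the statement is a classical one due to Calderbank--Shor and Steane, the cleanest completion is to invoke the standard CSS derivation for the structural parts and to verify explicitly only those ingredients that depend on the hypothesis ${\mathcal{C}}_2\subseteq {\mathcal{C}}_1$, namely the commutation condition $G_2 H_1^{T}=0$ and the count of independent stabilizer generators yielding the dimension $k_1-k_2$.
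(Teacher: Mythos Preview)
The paper does not prove this theorem at all: it is stated with the citation \cite{shor,steane} and used as a black box, so there is no ``paper's own proof'' to compare against. Your proposal is the standard CSS construction argument and is essentially correct as a sketch; the stabilizer built from the rows of $H_1$ (as $Z$-type) and $G_2$ (as $X$-type) is abelian precisely because ${\mathcal{C}}_2\subseteq{\mathcal{C}}_1$ forces $G_2H_1^{T}=0$, the generator count gives dimension $k_1-k_2$, and the identification of $N(S)$ and $S$ with the pairs $({\mathcal{C}}_1,{\mathcal{C}}_2^{\perp})$ and $({\mathcal{C}}_2,{\mathcal{C}}_1^{\perp})$ yields both the distance bound and the purity statement.

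Two small cautions if you intend to write this out fully. First, for general $q$ the commutation condition for $X(g)$ and $Z(h)$ involves the trace form rather than the plain dot product when $q$ is not prime, so the phrase ``$h\cdot g=0$ in $\mathbb{F}_q$'' needs a word of justification or restriction. Second, your purity sentence is phrased slightly off: purity to $t$ means the stabilizer (or, depending on the convention, the centralizer) contains no non-scalar element of weight below $t$; your argument that every non-identity element of $S$ has weight at least $\min\{d_1,d_2^{\perp}\}$ is exactly what is needed, but the parenthetical ``$\mathcal{C}_1^{\perp}\subseteq\mathcal{C}_2^{\perp}$ (weight $\ge d_2^{\perp}$)'' should be read as ``minimum weight of $\mathcal{C}_1^{\perp}$ is at least $d_2^{\perp}$ because $\mathcal{C}_1^{\perp}\subseteq\mathcal{C}_2^{\perp}$'', which is correct but worth stating carefully.
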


A special case that interests us is when ${\mathcal{C}}_{1} = {\mathcal{C}}%
_{2}$ and also the code is self-orthogonal. This is particularly interesting
because one can easily find codes which satisfy the self-orthogonality
condition.

\begin{theorem}
If ${\mathcal{C}}$ is a self-orthogonal cyclic code in $R_{\alpha,\beta}$ of
type $(\alpha,\beta;k_{0},k_{1},k_{2})$ with the minimum distance $d$, then
there exists an $[[N, N-2K, d^{\perp}]]$ stabilizer code in ${\mathbb{Z}}%
_{2}^{\alpha+2\beta}$ where $N=\alpha+2\beta,~K=k_{0}+k_{2}+2k_{1}$ and $%
d^{\perp}$ is the minimum distance of the dual code ${\mathcal{C}}^{\perp}$.
\end{theorem}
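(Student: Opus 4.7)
The plan is to reduce this to the binary CSS construction in Theorem \ref{css} by pushing ${\mathcal{C}}$ through the Gray map $\Phi$. First I would apply $\Phi$ to ${\mathcal{C}}$ to obtain a binary linear code $\Phi({\mathcal{C}})\subseteq {\mathbb{Z}}_2^N$ of length $N=\alpha+2\beta$. Linearity of $\Phi({\mathcal{C}})$ and the cardinality count $|\Phi({\mathcal{C}})|=2^{k_0+k_2+2k_1}=2^K$ follow directly from the general facts about $\mathbb{Z}_2\mathbb{Z}_2[u]$-linear codes recalled just before Theorem \ref{css}, so $\Phi({\mathcal{C}})$ has parameters $[N,K,d]_2$, where the minimum Hamming distance equals the minimum Lee weight $d$ of ${\mathcal{C}}$ by the isometry property of $\Phi$.

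Next I would invoke the fact, stated in the excerpt and proved in \cite{ism}, that $\Phi$ preserves orthogonality. Since ${\mathcal{C}}\subseteq {\mathcal{C}}^{\perp}$ by hypothesis, this gives $\Phi({\mathcal{C}})\subseteq \Phi({\mathcal{C}}^{\perp})$, and a standard dimension count identifies $\Phi({\mathcal{C}}^{\perp})$ with $\Phi({\mathcal{C}})^{\perp}$: the binary length is $N$, the image $\Phi({\mathcal{C}})$ has dimension $K$, and $\Phi({\mathcal{C}}^{\perp})$ has dimension $N-K$ (using that ${\mathcal{C}}^{\perp}$ has the complementary type in the sense of Theorem \ref{dual}). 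Therefore $\Phi({\mathcal{C}})\subseteq \Phi({\mathcal{C}})^{\perp}$ is a self-orthogonal binary linear code, and the minimum distance of $\Phi({\mathcal{C}})^{\perp}=\Phi({\mathcal{C}}^{\perp})$ equals $d^{\perp}$, again by the isometry property of $\Phi$.

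Now I would apply Theorem \ref{css} with ${\mathcal{C}}_1=\Phi({\mathcal{C}})^{\perp}$ and ${\mathcal{C}}_2=\Phi({\mathcal{C}})$. The inclusion ${\mathcal{C}}_2\subseteq {\mathcal{C}}_1$ is exactly the self-orthogonality established in the previous step. Plugging in $k_1=\dim {\mathcal{C}}_1=N-K$, $k_2=\dim {\mathcal{C}}_2=K$, $d_1=d^{\perp}$ and $d_2^{\perp}=d^{\perp}$ yields a stabilizer code with parameters $[[N,(N-K)-K,\,\min\{d^{\perp},d^{\perp}\}]]_2=[[N,N-2K,d^{\perp}]]$, which is the desired conclusion.

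The only real subtlety is the second step: one must be careful that the hypothesis in \cite{ism} was stated for self-dual codes, whereas here we only have self-orthogonality. However, the underlying identity used there is the pointwise relation $\langle \Phi(V),\Phi(W)\rangle_{\mathbb{Z}_2}=0$ whenever $\langle V,W\rangle=0$, which is a codeword-wise statement and so transfers without change to the self-orthogonal setting. Once this is noted, the rest of the argument is a bookkeeping exercise in dimensions and is the main place where attention to the type $(\alpha,\beta;k_0,k_1,k_2)$ is required.
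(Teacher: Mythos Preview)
Your proposal is correct and follows essentially the same approach as the paper: the paper's one-line proof simply says to take ${\mathcal{C}}_2={\mathcal{C}}$ and ${\mathcal{C}}_1={\mathcal{C}}^{\perp}$ in Theorem~\ref{css}, implicitly identifying these with their binary Gray images as set up in the preceding paragraph. You have spelled out the details the paper leaves implicit---the isometry of $\Phi$, the dimension count giving $\Phi({\mathcal{C}}^{\perp})=\Phi({\mathcal{C}})^{\perp}$, and the pointwise nature of orthogonality preservation---but the underlying argument is the same.
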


\begin{proof}
Taking $\C_{2}=\C$ and $\C_{1}=\C^{\perp}$ in Theorem \ref{css} we have the result.

\end{proof}
\begin{example}
Let $\mathcal{C}=\left\langle(f(x),0),(l(x),g(x)+ua(x))\right\rangle $ be a
cyclic code in $\mathbb{Z}_{2}[x]/\langle x^{14}-1\rangle\times R[x]/\langle
x^{21}-1\rangle$ where
\begin{eqnarray*}
f(x) &=&1+x+x^3+x^7+x^8+x^{10}, \\
g(x) &=&1+x+x^3+x^6+x^7+x^{10}+x^{13}+x^{15}, \\
a(x) &=&1 + x^2 + x^4 + x^5 + x^6, \\
l(x) &=&1 + x^4 + x^6 + x^8.
\end{eqnarray*}

Therefore, we can write the generator and parity-check matrices for ${%
\mathcal{C}}$ using the same way above. Then we have,
\begin{eqnarray*}
G=\left(
\begin{smallmatrix}
1 & 1 & 0 & 1 & 0 & 0 & 0 & 1 & 1 & 0 & 1 & 0 & 0 & 0 & 0 & 0 & 0 & 0 & 0 & 0
& 0 & 0 & 0 & 0 & 0 & 0 & 0 & 0 & 0 & 0 & 0 & 0 & 0 & 0 & 0 \\
0 & 1 & 1 & 0 & 1 & 0 & 0 & 0 & 1 & 1 & 0 & 1 & 0 & 0 & 0 & 0 & 0 & 0 & 0 & 0
& 0 & 0 & 0 & 0 & 0 & 0 & 0 & 0 & 0 & 0 & 0 & 0 & 0 & 0 & 0 \\
0 & 0 & 1 & 1 & 0 & 1 & 0 & 0 & 0 & 1 & 1 & 0 & 1 & 0 & 0 & 0 & 0 & 0 & 0 & 0
& 0 & 0 & 0 & 0 & 0 & 0 & 0 & 0 & 0 & 0 & 0 & 0 & 0 & 0 & 0 \\
0 & 0 & 0 & 1 & 1 & 0 & 1 & 0 & 0 & 0 & 1 & 1 & 0 & 1 & 0 & 0 & 0 & 0 & 0 & 0
& 0 & 0 & 0 & 0 & 0 & 0 & 0 & 0 & 0 & 0 & 0 & 0 & 0 & 0 & 0 \\
1 & 0 & 0 & 0 & 1 & 0 & 1 & 0 & 1 & 0 & 0 & 0 & 0 & 0 & 1+u & 1 & u & 1 & u
& u & 1+u & 1 & 0 & 0 & 1 & 0 & 0 & 1 & 0 & 1 & 0 & 0 & 0 & 0 & 0 \\
0 & 1 & 0 & 0 & 0 & 1 & 0 & 1 & 0 & 1 & 0 & 0 & 0 & 0 & 0 & 1+u & 1 & u & 1
& u & u & 1+u & 1 & 0 & 0 & 1 & 0 & 0 & 1 & 0 & 1 & 0 & 0 & 0 & 0 \\
0 & 0 & 1 & 0 & 0 & 0 & 1 & 0 & 1 & 0 & 1 & 0 & 0 & 0 & 0 & 0 & 1+u & 1 & u
& 1 & u & u & 1+u & 1 & 0 & 0 & 1 & 0 & 0 & 1 & 0 & 1 & 0 & 0 & 0 \\
0 & 0 & 0 & 1 & 0 & 0 & 0 & 1 & 0 & 1 & 0 & 1 & 0 & 0 & 0 & 0 & 0 & 1+u & 1
& u & 1 & u & u & 1+u & 1 & 0 & 0 & 1 & 0 & 0 & 1 & 0 & 1 & 0 & 0 \\
0 & 0 & 0 & 0 & 1 & 0 & 0 & 0 & 1 & 0 & 1 & 0 & 1 & 0 & 0 & 0 & 0 & 0 & 1+u
& 1 & u & 1 & u & u & 1+u & 1 & 0 & 0 & 1 & 0 & 0 & 1 & 0 & 1 & 0 \\
0 & 0 & 0 & 0 & 0 & 1 & 0 & 0 & 0 & 1 & 0 & 1 & 0 & 1 & 0 & 0 & 0 & 0 & 0 &
1+u & 1 & u & 1 & u & u & 1+u & 1 & 0 & 0 & 1 & 0 & 0 & 1 & 0 & 1 \\
0 & 1 & 1 & 0 & 0 & 1 & 1 & 1 & 1 & 1 & 1 & 0 & 0 & 0 & u & u & 0 & u & u & 0
& u & 0 & u & u & 0 & u & u & 0 & 0 & 0 & 0 & 0 & 0 & 0 & 0 \\
0 & 0 & 1 & 1 & 0 & 0 & 1 & 1 & 1 & 1 & 1 & 1 & 0 & 0 & 0 & u & u & 0 & u & u
& 0 & u & 0 & u & u & 0 & u & u & 0 & 0 & 0 & 0 & 0 & 0 & 0 \\
0 & 0 & 0 & 1 & 1 & 0 & 0 & 1 & 1 & 1 & 1 & 1 & 1 & 0 & 0 & 0 & u & u & 0 & u
& u & 0 & u & 0 & u & u & 0 & u & u & 0 & 0 & 0 & 0 & 0 & 0 \\
0 & 0 & 0 & 0 & 1 & 1 & 0 & 0 & 1 & 1 & 1 & 1 & 1 & 1 & 0 & 0 & 0 & u & u & 0
& u & u & 0 & u & 0 & u & u & 0 & u & u & 0 & 0 & 0 & 0 & 0 \\
1 & 0 & 0 & 0 & 0 & 1 & 1 & 0 & 0 & 1 & 1 & 1 & 1 & 1 & 0 & 0 & 0 & 0 & u & u
& 0 & u & u & 0 & u & 0 & u & u & 0 & u & u & 0 & 0 & 0 & 0 \\
1 & 1 & 0 & 0 & 0 & 0 & 1 & 1 & 0 & 0 & 1 & 1 & 1 & 1 & 0 & 0 & 0 & 0 & 0 & u
& u & 0 & u & u & 0 & u & 0 & u & u & 0 & u & u & 0 & 0 & 0 \\
1 & 1 & 1 & 0 & 0 & 0 & 0 & 1 & 1 & 0 & 0 & 1 & 1 & 1 & 0 & 0 & 0 & 0 & 0 & 0
& u & u & 0 & u & u & 0 & u & 0 & u & u & 0 & u & u & 0 & 0 \\
1 & 1 & 1 & 1 & 0 & 0 & 0 & 0 & 1 & 1 & 0 & 0 & 1 & 1 & 0 & 0 & 0 & 0 & 0 & 0
& 0 & u & u & 0 & u & u & 0 & u & 0 & u & u & 0 & u & u & 0 \\
1 & 1 & 1 & 1 & 1 & 0 & 0 & 0 & 0 & 1 & 1 & 0 & 0 & 1 & 0 & 0 & 0 & 0 & 0 & 0
& 0 & 0 & u & u & 0 & u & u & 0 & u & 0 & u & u & 0 & u & u%
\end{smallmatrix}
\right)
\end{eqnarray*}

\begin{equation*}
H=\left(
\begin{smallmatrix}
0 & 0 & 0 & 0 & 0 & 0 & 1 & 1 & 1 & 1 & 0 & 0 & 1 & 1 & 0 & 0 & 0 & 0 & 0 & 0
& 0 & 0 & 0 & 0 & 0 & 0 & 0 & 0 & 0 & 0 & 0 & 0 & 0 & 0 & 0 \\
1 & 0 & 0 & 0 & 0 & 0 & 0 & 1 & 1 & 1 & 1 & 0 & 0 & 1 & 0 & 0 & 0 & 0 & 0 & 0
& 0 & 0 & 0 & 0 & 0 & 0 & 0 & 0 & 0 & 0 & 0 & 0 & 0 & 0 & 0 \\
1 & 1 & 0 & 0 & 0 & 0 & 0 & 0 & 1 & 1 & 1 & 1 & 0 & 0 & 0 & 0 & 0 & 0 & 0 & 0
& 0 & 0 & 0 & 0 & 0 & 0 & 0 & 0 & 0 & 0 & 0 & 0 & 0 & 0 & 0 \\
0 & 1 & 1 & 0 & 0 & 0 & 0 & 0 & 0 & 1 & 1 & 1 & 1 & 0 & 0 & 0 & 0 & 0 & 0 & 0
& 0 & 0 & 0 & 0 & 0 & 0 & 0 & 0 & 0 & 0 & 0 & 0 & 0 & 0 & 0 \\
0 & 0 & 1 & 1 & 0 & 0 & 0 & 0 & 0 & 0 & 1 & 1 & 1 & 1 & 0 & 0 & 0 & 0 & 0 & 0
& 0 & 0 & 0 & 0 & 0 & 0 & 0 & 0 & 0 & 0 & 0 & 0 & 0 & 0 & 0 \\
1 & 0 & 0 & 1 & 1 & 0 & 0 & 0 & 0 & 0 & 0 & 1 & 1 & 1 & 0 & 0 & 0 & 0 & 0 & 0
& 0 & 0 & 0 & 0 & 0 & 0 & 0 & 0 & 0 & 0 & 0 & 0 & 0 & 0 & 0 \\
1 & 1 & 0 & 0 & 1 & 1 & 0 & 0 & 0 & 0 & 0 & 0 & 1 & 1 & 0 & 0 & 0 & 0 & 0 & 0
& 0 & 0 & 0 & 0 & 0 & 0 & 0 & 0 & 0 & 0 & 0 & 0 & 0 & 0 & 0 \\
0 & 0 & 0 & 0 & 1 & 1 & 1 & 0 & 0 & 1 & 0 & 0 & 0 & 0 & 0 & 0 & 0 & 0 & 0 & 0
& 0 & 0 & 1 & 0 & 0 & 0 & 0 & 0 & 1+u & 0 & u & 1 & u & u & 1+u \\
0 & 0 & 0 & 0 & 0 & 1 & 1 & 1 & 0 & 0 & 1 & 0 & 0 & 0 & 1+u & 0 & 0 & 0 & 0
& 0 & 0 & 0 & 0 & 1 & 0 & 0 & 0 & 0 & 0 & 1+u & 0 & u & 1 & u & u \\
0 & 0 & 0 & 0 & 0 & 0 & 1 & 1 & 1 & 0 & 0 & 1 & 0 & 0 & u & 1+u & 0 & 0 & 0
& 0 & 0 & 0 & 0 & 0 & 1 & 0 & 0 & 0 & 0 & 0 & 1+u & 0 & u & 1 & u \\
0 & 0 & 0 & 0 & 0 & 0 & 0 & 1 & 1 & 1 & 0 & 0 & 1 & 0 & u & u & 1+u & 0 & 0
& 0 & 0 & 0 & 0 & 0 & 0 & 1 & 0 & 0 & 0 & 0 & 0 & 1+u & 0 & u & 1 \\
0 & 0 & 0 & 0 & 0 & 0 & 0 & 0 & 1 & 1 & 1 & 0 & 0 & 1 & 1 & u & u & 1+u & 0
& 0 & 0 & 0 & 0 & 0 & 0 & 0 & 1 & 0 & 0 & 0 & 0 & 0 & 1+u & 0 & u \\
1 & 0 & 0 & 0 & 0 & 0 & 0 & 0 & 0 & 1 & 1 & 1 & 0 & 0 & u & 1 & u & u & 1+u
& 0 & 0 & 0 & 0 & 0 & 0 & 0 & 0 & 1 & 0 & 0 & 0 & 0 & 0 & 1+u & 0 \\
0 & 1 & 0 & 0 & 0 & 0 & 0 & 0 & 0 & 0 & 1 & 1 & 1 & 0 & 0 & u & 1 & u & u &
1+u & 0 & 0 & 0 & 0 & 0 & 0 & 0 & 0 & 1 & 0 & 0 & 0 & 0 & 0 & 1+u \\
0 & 0 & 1 & 0 & 0 & 0 & 0 & 0 & 0 & 0 & 0 & 1 & 1 & 1 & 1+u & 0 & u & 1 & u
& u & 1+u & 0 & 0 & 0 & 0 & 0 & 0 & 0 & 0 & 1 & 0 & 0 & 0 & 0 & 0 \\
1 & 0 & 0 & 1 & 0 & 0 & 0 & 0 & 0 & 0 & 0 & 0 & 1 & 1 & 0 & 1+u & 0 & u & 1
& u & u & 1+u & 0 & 0 & 0 & 0 & 0 & 0 & 0 & 0 & 1 & 0 & 0 & 0 & 0 \\
0 & 0 & 1 & 1 & 0 & 0 & 1 & 1 & 1 & 1 & 1 & 1 & 0 & 0 & 0 & 0 & 0 & 0 & u & 0
& u & 0 & u & u & 0 & 0 & u & u & u & 0 & u & u & u & u & 0 \\
0 & 0 & 0 & 1 & 1 & 0 & 0 & 1 & 1 & 1 & 1 & 1 & 1 & 0 & 0 & 0 & 0 & 0 & 0 & u
& 0 & u & 0 & u & u & 0 & 0 & u & u & u & 0 & u & u & u & u \\
0 & 0 & 0 & 0 & 1 & 1 & 0 & 0 & 1 & 1 & 1 & 1 & 1 & 1 & u & 0 & 0 & 0 & 0 & 0
& u & 0 & u & 0 & u & u & 0 & 0 & u & u & u & 0 & u & u & u \\
1 & 0 & 0 & 0 & 0 & 1 & 1 & 0 & 0 & 1 & 1 & 1 & 1 & 1 & u & u & 0 & 0 & 0 & 0
& 0 & u & 0 & u & 0 & u & u & 0 & 0 & u & u & u & 0 & u & u \\
1 & 1 & 0 & 0 & 0 & 0 & 1 & 1 & 0 & 0 & 1 & 1 & 1 & 1 & u & u & u & 0 & 0 & 0
& 0 & 0 & u & 0 & u & 0 & u & u & 0 & 0 & u & u & u & 0 & u \\
1 & 1 & 1 & 0 & 0 & 0 & 0 & 1 & 1 & 0 & 0 & 1 & 1 & 1 & u & u & u & u & 0 & 0
& 0 & 0 & 0 & u & 0 & u & 0 & u & u & 0 & 0 & u & u & u & 0%
\end{smallmatrix}
\right).
\end{equation*}

Hence, ${\mathcal{C}}$ is a self-orthogonal code and $\Phi({\mathcal{C}})$
is a $[56,25,6]$ binary code and $\Phi({\mathcal{C}}^{\perp})$ is a $%
[56,31,6]$ binary code. We can conclude that there exists a $[[56,6,6]]_2$
stabilizer code.
\end{example}

\begin{example}
Let $\mathcal{C}=\left\langle(f(x),0),(l(x),g(x)+ua(x))\right\rangle $ be a
cyclic code in $\mathbb{Z}_{2}[x]/\langle x^{6}-1\rangle\times R[x]/\langle
x^{5}-1\rangle$ where
\begin{eqnarray*}
f(x) &=&x^{6}-1, \\
g(x) &=&x^{5}-1, \\
a(x) &=&1 + x + x^2 + x^3 + x^4, \\
l(x) &=&1+x+x^2+x^3+x^4+x^5.
\end{eqnarray*}

Therefore, we can write the generator and parity-check matrices for ${%
\mathcal{C}}$ using the same way above. Then we have,
\begin{equation*}
G=\left(
\begin{array}{ccccccccccc}
1 & 1 & 1 & 1 & 1 & 1 & u & u & u & u & u%
\end{array}
\right)
\end{equation*}
\begin{equation*}
H=\left(
\begin{array}{ccccccccccc}
0 & 0 & 0 & 0 & 1 & 1 & 0 & 0 & 0 & 0 & 0 \\
1 & 0 & 0 & 0 & 0 & 1 & 0 & 0 & 0 & 0 & 0 \\
1 & 1 & 0 & 0 & 0 & 0 & 0 & 0 & 0 & 0 & 0 \\
0 & 1 & 1 & 0 & 0 & 0 & 0 & 0 & 0 & 0 & 0 \\
0 & 0 & 1 & 1 & 0 & 0 & 0 & 0 & 0 & 0 & 0 \\
0 & 0 & 0 & 0 & 0 & 1 & 1+u & 0 & 0 & 0 & 0 \\
1 & 0 & 0 & 0 & 0 & 0 & 0 & 1+u & 0 & 0 & 0 \\
0 & 1 & 0 & 0 & 0 & 0 & 0 & 0 & 1+u & 0 & 0 \\
0 & 0 & 1 & 0 & 0 & 0 & 0 & 0 & 0 & 1+u & 0 \\
0 & 0 & 0 & 1 & 0 & 0 & 0 & 0 & 0 & 0 & 1+u%
\end{array}
\right)
\end{equation*}
Here, $\Phi({\mathcal{C}})$ is a binary $[16,1,16]$ code where $\Phi({%
\mathcal{C}}^{\perp})$ is a $[16,15,2]$ binary code. Then we can conclude
that there exits [[16,14,2]] quantum code which has optimal parameters \cite%
{8}.
\end{example}

\begin{example}
Let $\mathcal{C}$ be a cyclic code in $\mathbb{Z}_{2}[x]/\langle
x^{8}-1\rangle\times R[x]/\langle x^{5}-1\rangle$ with the following
generator polynomials.
\begin{eqnarray*}
f(x) &=&1+x+x^2+x^3+x^4+x^5+x^6+x^7, \\
g(x) &=&x^{5}-1, \\
a(x) &=&1 + x + x^2 + x^3 + x^4, \\
l(x) &=&1 + x^2 + x^4 + x^6.
\end{eqnarray*}

Then, the generator and parity-check matrices of ${\mathcal{C}}$ are
\begin{equation*}
G=\left(
\begin{array}{ccccccccccccc}
1 & 1 & 1 & 1 & 1 & 1 & 1 & 1 & 0 & 0 & 0 & 0 & 0 \\
1 & 0 & 1 & 0 & 1 & 0 & 1 & 0 & u & u & u & u & u%
\end{array}
\right)
\end{equation*}

\begin{equation*}
H=\left(
\begin{array}{ccccccccccccc}
0 & 0 & 0 & 0 & 0 & 1 & 0 & 1 & 0 & 0 & 0 & 0 & 0 \\
1 & 0 & 0 & 0 & 0 & 0 & 1 & 0 & 0 & 0 & 0 & 0 & 0 \\
0 & 1 & 0 & 0 & 0 & 0 & 0 & 1 & 0 & 0 & 0 & 0 & 0 \\
1 & 0 & 1 & 0 & 0 & 0 & 0 & 0 & 0 & 0 & 0 & 0 & 0 \\
0 & 1 & 0 & 1 & 0 & 0 & 0 & 0 & 0 & 0 & 0 & 0 & 0 \\
0 & 0 & 1 & 0 & 1 & 0 & 0 & 0 & 0 & 0 & 0 & 0 & 0 \\
0 & 0 & 0 & 0 & 0 & 1 & 1 & 0 & 0 & 0 & 0 & 0 & 1+u \\
0 & 0 & 0 & 0 & 0 & 0 & 1 & 1 & 1+u & 0 & 0 & 0 & 0 \\
1 & 0 & 0 & 0 & 0 & 0 & 0 & 1 & 0 & 1+u & 0 & 0 & 0 \\
1 & 1 & 0 & 0 & 0 & 0 & 0 & 0 & 0 & 0 & 1+u & 0 & 0 \\
0 & 1 & 1 & 0 & 0 & 0 & 0 & 0 & 0 & 0 & 0 & 1+u & 0%
\end{array}
\right)
\end{equation*}
where, $\Phi({\mathcal{C}})$ is a $[18,2,8]_{2}$ code and $\Phi({\mathcal{C}}%
^{\perp})$ is a $[18,16,2]_{2}$ code. Therefore, we have [[18,14,2]] quantum
code which is optimal.
\end{example}

\section{Conclusion}

In this work, we study self-dual cyclic codes over ${\mathbb{Z}_{2}}^{\alpha}\times R^{\beta}$ where $R=\{0,1,u,1+u\}$ with $u^{2}=0$. We obtain
quantum codes using self-orthogonal ${\mathbb{Z}_{2}\mathbb{Z}_{2}[u]}$%
-cyclic codes via CSS quantum code construction. We also give some examples
of both self-dual cyclic and quantum codes. Some of the examples of
quantum codes we provided are optimal which makes this class of codes an
interesting class to be studied further.

\end{document}